\newtheorem{theorem}{Theorem}
\newtheorem{lemma}[theorem]{Lemma}
\newtheorem{corollary}[theorem]{Corollary}
\begin{document}

\newcommand{\X}{{\mathcal{X}}}
\newcommand{\cU}{{\mathcal{U}}}
\newcommand{\cI}{{\mathcal{I}}}
\newcommand{\cC}{{\mathcal{C}}}
 
\newcommand{\change}{\color{blue}}   
\title{Investigating the Recoverable Robust Single Machine Scheduling Problem Under Interval Uncertainty}

\author[1]{Matthew Bold\footnote{Corresponding author, email: m.bold1@lancaster.ac.uk}}
\author[2]{Marc Goerigk}

\affil[1]{STOR-i Centre for Doctoral Training, Lancaster University, Lancaster, UK}	
\affil[2]{Network and Data Science Management, University of Siegen, Siegen, Germany}

\date{}

\maketitle

\begin{abstract}
	We investigate the recoverable robust single machine scheduling problem under interval uncertainty. In this setting, jobs have first-stage processing times $\bm{p}$ and second-stage processing times $\bm{q}$ and we aim to find a first-stage and second-stage schedule with a minimum combined sum of completion times, such that at least $\Delta$ jobs share the same position in both schedules.

	We provide positive complexity results for some important special cases of this problem, as well as derive a 2-approximation algorithm to the full problem. Computational experiments examine the performance of an exact mixed-integer programming formulation and the approximation algorithm, and demonstrate the strength of a proposed polynomial time greedy heuristic.
\end{abstract}

\noindent\textbf{Keywords:} scheduling; optimisation under uncertainty; recoverable robustness


\section{Introduction}

In this paper, we consider a recoverable robust version of a single machine scheduling problem in which $n$ jobs must be scheduled on a single machine without preemption, such that the total flow time, i.e. the sum of job completion times, is minimised. Under the $\alpha|\beta|\gamma$ scheduling notation introduced by \cite{graham1979optimization}, the nominal problem is denoted as $1||\sum C_i$. In reality, job processing times are usually subject to some degree of uncertainty. When this is the case, it is important to account for this uncertainty already in the construction of scheduling solutions. Here, we consider the case where the job processing times are assumed to lie within specified intervals and examine the recoverable robust optimisation that arises from assuming a two-stage decision process.

The nominal problem can be stated as follows. Given a set of jobs $N=\{1,\dots,n\}$ with processing times $\bm{p}=(p_1,\dots,p_n)$, find an ordering of the jobs $j\in N$ such that the sum of completion times is minimised. In other words, we want to find a permutation $\sigma$ of the set $N$, such that $\sum_{i\in N} (n+1-i)p_{\sigma(i)}$ is minimised. This nominal problem is easy to solve using the shortest processing time (SPT) rule of sorting the jobs by non-decreasing processing times. This problem can also be modelled as the following assignment problem with non-general cost structure:
\begin{align}
\min\, &\sum_{i\in N}\sum_{j\in N} p_j(n+1-i)x_{ij}\\
\text{s.t. } &\sum_{i\in N} x_{ij} = 1 & \forall j \in N \label{assignment1}\\
	     & \sum_{j\in N} x_{ij} = 1 & \forall i \in N \label{assignment2}\\
	     & x_{ij}\in\{0,1\} & \forall i,\,j \in N,
\end{align}
where $x_{ij}=1$ if job $j$ is scheduled in position $i$, and $x_{ij}=0$ otherwise.

The problem we consider in this paper considers a two-stage decision process. Suppose that in the first-stage, the jobs $j\in N$ have processing times given by $\bm{p}=(p_1,\dots,p_n)$, and in the second-stage they have different processing times given by $\bm{q}=(q_1,\dots,q_n)$. We aim to find a first-stage schedule and second-stage schedule that minimises the combined cost of these two schedules, such that the position of at least $\Delta$ jobs remain unchanged between the two schedules, i.e. only up to $n-\Delta$ jobs change position. This problem can be written as 
\begin{align*}
\tag{RecSMSP}
\min\, &\sum_{i\in N}\sum_{j\in N} p_j (n+1-i) x_{ij} + \sum_{i\in N}\sum_{j\in N} q_j (n+1-i) y_{ij} \label{eqn:recsmsp} \\
\text{s.t. } & |X \cap Y| \geq \Delta \\
    & \bm{x},\,\bm{y}\in\X
\end{align*}
where $X=\{(i,j) \in N\times N : x_{ij} = 1\}$ is the assignment corresponding to the first-stage schedule, $Y=\{(i,j)\in N\times N:y_{ij}=1\}$ is the assignment corresponding to the second-stage schedule, and $\X=\{\bm{x}\in\{0,1\}^{n\times n}: \eqref{assignment1}, \eqref{assignment2}\}$ is the set of feasible schedules. Condition $|X\cap Y| \ge \Delta$ is equivalent to demanding that
\[ \sum_{i\in N} \sum_{j\in N} x_{ij} y_{ij} \ge \Delta. \]
(\ref{eqn:recsmsp}) can be modelled as a mixed-integer program (MIP) following the linearisation of this constraint. This is achieved with the introduction of an additional set of $\bm{z}$ variables, and the following constraints:
\begin{align*}
	& z_{ij}\leq x_{ij} & \forall i,\,j \in N\\
	& z_{ij}\leq y_{ij} & \forall i,\,j \in N\\
	& \sum_{i\in N}\sum_{j\in N} z_{ij}\geq \Delta\\
	& z_{ij}\in\{0,1\} & \forall i,\,j \in N
\end{align*}
Observe how the $\bm{z}$ variables represent the shared assignments of the first and second-stages, i.e. $z_{ij}=1$ if job $j$ is assigned to position $i$ in both the first and second-stage schedules. 

This problem can be considered to be a recoverable robust optimisation problem with interval uncertainty (see \cite{liebchen2009concept}). A recoverable robust optimisation problem consist of two-stages. A full first-stage solution must be determined under the problem uncertainty, before an adversary chooses a worst-case realisation of the uncertain data. Then, in response to this realisation, the first-stage solution can be recovered, i.e. amended in some limited way, to obtain a second-stage solution. For a general survey on robust discrete optimisation problems, we refer the reader to \cite{kasperski2016robust}.

We consider exactly this recoverable robust problem in the setting where the second-stage job processing times $\bm{q}$, are uncertain, but known to lie within an interval (or box) uncertainty set given by 
$$\mathcal{U}=\left\{\bm{q}\in \mathbb{R}^n_+:q_j\in[\hat{q}_j-\bar{q}_j,\hat{q}_j+\bar{q}_j],\,j\in N\right\}.$$
That is, each job $j\in N$ has a nominal second-stage processing time given by $\hat{q}_j$, but can deviate from this nominal value by up to $\bar{q}_j$. The worst-case scenario for this uncertainty set occurs when each job simultaneously achieves its worst-case processing time, i.e. $\hat{q}_j + \bar{q}_j$ for each $j\in N$. Denoting the first-stage duration of $j\in N$ as $p_j$, and its worst-case second-stage duration as $q_j$, we get (\ref{eqn:recsmsp}).

There exists a number of papers that consider various discrete optimisation problems in the setting of recoverable robust optimisation with interval uncertainty. The resulting problems they study therefore also contain the same intersection constraints that we consider here. Our work in this paper extends this framework to the context of single-machine scheduling. \cite{busing2012recoverable} studies recoverable robust shortest path problems, whilst \cite{hradovich2017recoverable,hradovich2017recoverableb} investigate the recoverable robust spanning tree problem. \cite{kasperski2017robust} consider the recoverable robust selection problem under both discrete and interval uncertainty and, most recently, \cite{goerigk2021recoverable} consider the recoverable robust travelling salesman problem. The single-machine scheduling problem we consider is an assignment problem with a specific cost structure, and therefore \cite{fischer2020investigation} is the paper that is most closely related to our work. In this paper, the authors examine the complexity of the recoverable robust assignment problem with interval uncertainty. Amongst other results, they show that this problem is W[1]-hard with respect to $\Delta$ and $n-\Delta$. Even though they use only 0-1 costs for this reduction, note that this hardness result does not extend to the scheduling cost structure that we consider here. 

As a brief comment on other papers that consider robust assignment problems, \cite{dei2006robust} studies the problem in the context of discrete scenarios and \cite{pereira2011exact} considers interval uncertainty in combination with the regret criterion.



The single machine scheduling problem with the objective of minimising the (weighted) sum of completion times under uncertain job processing times is well-studied, particularly for the case of discrete uncertain scenarios. \cite{daniels1995robust}, \cite{kouvelis1997robust}, \cite{yang2002robust} and \cite{aloulou2008complexity} show that even for just two discrete scenarios, robust versions of this problem are NP-hard, whilst \cite{mastrolilli2013single} show that no polynomial-time approximation algorithm exists. \cite{zhao2010family} propose a cutting plane algorithm to solve the problem. More recently, \cite{kasperski2016single} considered the problem for the ordered weighted averaging (OWA) criterion, of which classical robustness is a special case, and \cite{kasperski2019risk} considered the problem for the value at risk (VaR) and conditional value at risk (CVaR) criteria.

The case of interval uncertainty has also received a lot of attention. \cite{daniels1995robust} characterised a set of dominance relations between jobs in an optimal schedule in the case of interval uncertainty. \cite{lebedev2006complexity} showed that the problem is NP-hard for regret robustness, whilst \cite{montemanni2007mixed} presented a compact MIP that was shown to be able to solve instances involving up to 45 jobs. \cite{kasperski20082} showed that the regret problem is 2-approximable when the corresponding nominal problem is solvable in polynomial time. For a survey of robust single machine scheduling for both discrete and interval uncertainty, see \cite{kasperski2014minmax}.

For robust single machine scheduling in the context of budgeted uncertainty, \cite{lu2014robust} presented an MIP and heuristic to solve the problem, before \cite{bougeret2019robust} examined its complexity. Most recently, \cite{bold2020recoverable} considered the recoverable robust problem under a different similarity measure to the one considered in this paper.

The structure and contributions of this paper are as follows. Section \ref{sec:properties} presents a number of positive results for the problem, including an efficient method for optimal scheduling when a set of jobs which must have the same first and second-stage positions is given. Section \ref{section:2approx} then presents a 2-approximation algorithm and greedy heuristic. The performance of the exact MIP formulation as well as the heuristics we present are examined experimentally in Section \ref{section:experiments}, before concluding remarks are made in Section \ref{section:conclusion}.

\section{Problem properties}
\label{sec:properties}



We begin our analysis of recoverable robust single machine scheduling problem by considering the following question: Given a set $M\subseteq N$ of jobs that must be scheduled in the same position in both the first and the second stage, what is the best possible solution? That is, we consider the problem of calculating
\begin{align*}\tag{RecFix}
f(M) = \min\, &\sum_{i\in N}\sum_{j\in N} p_j (n+1-i) x_{ij} + \sum_{i\in N}\sum_{j\in N} q_j (n+1-i) y_{ij} \label{eqn:recfix} \\
\text{s.t. } & x_{ij} = y_{ij} & \forall i\in N, j \in M \\
	     & \bm{x},\,\bm{y}\in\X
\end{align*}
Note that a solution $(\bm{x},\bm{y})$ of~\eqref{eqn:recfix} may intersect on more jobs than just those in the set $M$. Also note that if $|M|\ge\Delta$, then the solution to this problem is feasible to (RecSMSP). Therefore, the optimal objective value of (RecSMSP) is equal to $\min \{ f(M) : |M| \ge \Delta\}$. In Algorithm~\ref{alg:eval}, we show how to calculate the value $f(M)$ in polynomial time.

\begin{algorithm}[h] 
\caption{Evaluation method for $f(M)$} \label{alg:eval}
\begin{algorithmic}[1]
\Procedure{Eval}{$\bm{p},\bm{q},M$}

\State Set $\bm{a}=(a_j)_{j \in N \setminus M}$ to be the vector of values $p_j$, $j\in N\setminus M$, sorted by non-decreasing values

\State Set $\bm{b}=(b_j)_{j \in N \setminus M}$ to be the vector of values $q_j$, $j\in N\setminus M$, sorted by non-decreasing values

\State Set $\bm{c}=(c_j)_{j \in M}$ to be the vector of values $p_j+q_j$, $j\in M$

\State Set $\bm{d}=\bm{a}+\bm{b} = (a_1+b_1,\dots, a_{N\setminus M}+b_{N\setminus M})$

\State Let $\bm{e}=(e_j)_{j\in N}$ be the vector found by concatenating vectors $\bm{c}$ and $\bm{d}$ and sorting by non-decreasing values

\State \Return $\sum_{i\in N} (n+1-i) e_i$
\EndProcedure

\end{algorithmic}
\end{algorithm}


The following example demonstrates the implementation of this ordering rule. Consider the data shown in Table~\ref{table:example} and suppose that $M=\{3,4\}$, i.e. jobs 3 and 4 must share the same position in the first and second-stage schedules. Then $N\setminus M=\{1,2,5\}$, i.e., jobs 1, 2 and 5 can be assigned different positions in the two schedules. The sorted first and second-stage processing times of jobs in $N\setminus M$ are $\bm{a} = (2,3,5)$ and $\bm{b} = (1,4,6)$ respectively, and the joint processing times for the jobs in $M$ are $\bm{c} = (14, 6)$. Then $\bm{d} = \bm{a} + \bm{b} = (3, 7, 11)$ and $\bm{e} = (3, 6, 7, 11, 14)$ as the merged and sorted vector processing times. This corresponds to the first-stage schedule $\sigma_1=(5,4,2,1,3)$, and the second-stage schedule $\sigma_2=(2,4,1,5,3)$. Note that, as required, jobs 3 and 4 are placed in the same position in both the first and second-stage schedules.

\begin{table}[htb]
\centering
\begin{tabular} {c  | r r r r r}
	$j$ & 1 & 2 & $\bm{3}$ & $\bm{4}$ & 5 \\
	\hline
	$p_j$ & 5 & 3 & $\bm{5}$ & $\bm{1}$ & 2 \\
	$q_j$ & 4 & 1 & $\bm{9}$ & $\bm{5}$ & 6 \\
	$p_j+q_j$ & 9 & 4 & $\bm{14}$ & $\bm{6}$ & 8 \\
\end{tabular}
\caption{Example problem data. Columns of $M=\{3,4\}$ are in bold.}
\label{table:example}
\end{table}

We now show that Algorithm~\ref{alg:eval} does indeed give an optimal solution to problem (\ref{eqn:recfix}).

\begin{theorem}\label{obs1}
Algorithm~\ref{alg:eval} calculates $f(M)$ for any $M\subseteq N$.
\end{theorem}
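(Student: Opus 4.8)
The plan is to reduce~\eqref{eqn:recfix} to two nested minimisations and apply the rearrangement inequality to each. First I would record the combinatorial shape of a feasible solution: because the constraint $x_{ij}=y_{ij}$, $j\in M$, forces every job of $M$ into the \emph{same} slot in both stages, a solution is determined by a set $P\subseteq N$ of $|M|$ positions occupied by the jobs of $M$, a bijection $M\to P$, and two \emph{independent} bijections $N\setminus M\to N\setminus P$ (one per stage); slot $i$ carries the weight $n+1-i$.

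\emph{Inner problem (fixed $P$).} For a fixed $P$ the contribution of the jobs of $M$ equals $\sum_{j\in M}(n+1-i_j)(p_j+q_j)$ with $i_j\in P$ the slot of $j$; since $P$ is fixed this term is decoupled from the rest, so by the rearrangement inequality it is minimised by pairing the larger weights of $P$ with the smaller entries of $\bm c$. Independently, the first-stage cost of $N\setminus M$ is minimised over the first-stage bijection by pairing the sorted weights of $N\setminus P$ with $\bm a$, and the second-stage cost by pairing the same sorted weights with $\bm b$; summing, the combined $N\setminus M$ contribution is minimised at $\sum_k w_k(a_k+b_k)=\sum_k w_k d_k$, where $w_1\ge w_2\ge\cdots$ are the weights of $N\setminus P$ in non-increasing order and $\bm d=\bm a+\bm b$ is already non-decreasing. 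Writing $g(P)$ for this inner optimum, we get $f(M)=\min_{|P|=|M|}g(P)$.

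\emph{Outer problem (choice of $P$).} The crux is that $g(P)$ is exactly the cost $\sum_v v\,\phi_P(v)$ of one particular bijection $\phi_P$ between the value multiset $V=\{c_j:j\in M\}\cup\{d_k:k\}$ and the weight multiset $\{n+1-i:i\in N\}$ — namely the one pairing sorted $\bm c$ with the sorted weights of $P$ and sorted $\bm d$ with the sorted weights of $N\setminus P$. By the rearrangement inequality every bijection $V\to\{n+1-i\}$, in particular $\phi_P$, has cost at least that of the globally sorted-to-sorted pairing, which is precisely $\sum_{i\in N}(n+1-i)e_i$ for $\bm e$ the sorted merge of $\bm c$ and $\bm d$; hence $g(P)\ge\sum_i(n+1-i)e_i$ for all $P$. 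Conversely, choosing $P$ to be the set of positions that the $\bm c$-entries occupy inside the sorted merge $\bm e$ attains this bound, since both the $\bm c$-entries and the $\bm d$-entries then inherit the global sorted order and $\phi_P$ becomes the globally optimal pairing. Therefore $\min_P g(P)=\sum_i(n+1-i)e_i$, the value returned by Algorithm~\ref{alg:eval}, and the running time is dominated by the sorts and the merge, i.e.\ $O(n\log n)$.

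I expect the only real obstacle to be the attainability half of the outer step: verifying that restricting the globally optimal bijection to the $\bm c$-entries yields exactly the rearrangement-optimal pairing of $\bm c$ with the corresponding weight subset (and similarly for $\bm d$), so that this optimal bijection is genuinely of the form $\phi_P$. This rests on the elementary fact that a subsequence of a sorted sequence is sorted, and is handled cleanly by indexing $\bm e$ and tracking which entries originate from $\bm c$. The two separate invocations of the rearrangement inequality — once within a fixed $P$, once across the merged vector — together with the bookkeeping of ``free'' weights versus weights tied to $M$, are the points needing care; the rest is routine.
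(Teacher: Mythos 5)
Your proposal is correct and follows essentially the same route as the paper's proof: fix the set of slots occupied by the jobs of $M$, observe that the inner assignments are then solved by sorting ($\bm{c}$ within the fixed slots, $\bm{a}$ and $\bm{b}$ independently outside, giving $\bm{d}=\bm{a}+\bm{b}$), and reduce the choice of slots to sorting the merged vector $\bm{e}$ against the weights $n+1-i$. You merely make explicit, via the rearrangement inequality and the lower/upper bound argument for the merge, what the paper asserts directly, so no substantive difference remains.
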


\begin{proof}
Let any $M\subseteq N$ be given. We denote these jobs as $M=\{j_1,\ldots,j_m\}$ with $m=|M|$. Let us first assume that we already know the set of slots $K=\{i_1,\ldots,i_m\}\subseteq N$ into which the jobs will be scheduled. We will then show how to find such a set. We further denote by $N\setminus M = \{j'_1,\ldots,j'_{n-m}\}$ and $N\setminus K=\{i'_1,\ldots,i'_{n-m}\}$ the sets of jobs and slots that are not part of $M$ and $K$, respectively. Without loss of generality, we assume that $i_1<i_2<\ldots<i_m$ and $i'_1<i'_2<\ldots<i'_{n-m}$.
The resulting problem thus decomposes into the two subproblems
\[
 \min_{\bm{z}\in\X} \ \sum_{k=1}^m \sum_{\ell=1}^m (n+1-i_k)(p_{j_\ell}+q_{j_\ell}) z_{k\ell} \\
+ \min_{\bm{x},\bm{y}\in\X} \ \sum_{k=1}^{n-m} \sum_{\ell=1}^{n-m} (n+1-i'_k)(p_{j'_\ell} x_{k\ell}+q_{j'_\ell} y_{k\ell}) \]
where with slight abuse of notation, $\X$ denotes the set of assignments with suitable dimension. Note that independent of the choice of $K$, we can find optimal solutions $\bm{z}^*$, $\bm{x}^*$ and $\bm{y}^*$. Sorting jobs $j\in M$ by non-decreasing joint processing times $p_j+q_j$ gives $\bm{z}^*$. We denote the resulting vector of joint processing times as $\bm{c}=(c_j)_{j\in M}$. Sorting jobs $j\in N\setminus M$ by non-decreasing $p_j$ gives $\bm{x}^*$, and by non-decreasing $q_j$ gives $\bm{y}^*$. We denote the resulting vector of sums of the sorted processing times as $\bm{d}=(d_j)_{j\in N\setminus M}$.

Let us fix these assignments accordingly and consider how to find set $K$. This is equivalent to assigning the positions $i_1,\ldots,i_m$ and $i'_1,\ldots,i'_{n-m}$ to slots in $N$. Hence, the value $f(M)$ is equal to
\[
\min_{\bm{\tau}\in\X} \sum_{i\in N} \sum_{\ell=1}^m (n+1- i\tau_{i\ell}) c_\ell +  \sum_{i\in N} \sum_{\ell=1}^{n-m} (n+1- i\tau_{i,\ell+m}) d_\ell
\]
An optimal solution to this problem can be found by assigning the positions according to non-decreasing processing times of their associated jobs, i.e., to sort the concatenated vector of processing times consisting of $\bm{c}$ and $\bm{d}$. Algorithm~\ref{alg:eval} is exactly this solution procedure.
\end{proof}

Therefore, if it is known which set of jobs $M$ with cardinality $\Delta$ must share a position in the first and second-stage schedule, the remaining problem can be solved in $O(n\log n)$ time. This immediately gives the following result.

\begin{corollary}
For a constant value of $\Delta$, (\ref{eqn:recsmsp}) can be solved in polynomial time $O(n^{\Delta+1}\log n)$. 
\end{corollary}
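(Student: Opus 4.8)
The plan is to reduce~\eqref{eqn:recsmsp} to a polynomial number of evaluations of Algorithm~\ref{alg:eval}. Recall from the discussion preceding that algorithm that the optimal value of~\eqref{eqn:recsmsp} equals $\min\{f(M) : M\subseteq N,\ |M|\ge\Delta\}$. The first step is to replace the condition $|M|\ge\Delta$ by $|M|=\Delta$. For this I would use the monotonicity of $f$: if $M'\subseteq M$, then the feasible region of~\eqref{eqn:recfix} for the fixed set $M$ is contained in that for $M'$, since the former is defined by a superset of the equality constraints (those indexed by $j\in M$ rather than just $j\in M'$), so $f(M')\le f(M)$. Applying this with $M'$ any $\Delta$-element subset of a minimiser $M$ of the right-hand side shows $\min\{f(M):|M|\ge\Delta\}=\min\{f(M):|M|=\Delta\}$, so it suffices to search over $\Delta$-element subsets and return the schedule that Algorithm~\ref{alg:eval} produces for the best one.

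The second step is the running-time bound. Since $\Delta$ is constant, there are $\binom{n}{\Delta}=O(n^\Delta)$ subsets $M\subseteq N$ with $|M|=\Delta$. By Theorem~\ref{obs1}, Algorithm~\ref{alg:eval} computes $f(M)$ for each such $M$, and its cost is dominated by a constant number of sorts of vectors of length at most $n$, hence $O(n\log n)$. Enumerating all $\Delta$-element subsets, evaluating $f(M)$ on each, keeping a minimiser $M^\star$, and finally outputting the schedule produced by Algorithm~\ref{alg:eval} on $M^\star$ therefore takes $O(n^\Delta)\cdot O(n\log n)=O(n^{\Delta+1}\log n)$ time in total.

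I do not anticipate a real obstacle here. The only step that genuinely needs care is the reduction from $|M|\ge\Delta$ to $|M|=\Delta$: without it the naive enumeration would range over up to $2^n$ sets, so the monotonicity argument $f(M')\le f(M)$ for $M'\subseteq M$ is precisely what makes the $O(n^{\Delta+1}\log n)$ bound possible. Correctness of each individual evaluation is already guaranteed by Theorem~\ref{obs1}.
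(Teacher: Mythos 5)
Your proposal is correct and follows essentially the same route as the paper: enumerate the $\binom{n}{\Delta}=O(n^\Delta)$ subsets of size $\Delta$ and evaluate each with Algorithm~\ref{alg:eval} in $O(n\log n)$ time. The only difference is that you explicitly justify restricting to $|M|=\Delta$ via monotonicity of $f$ (the same relaxation argument the paper records later as Lemma~\ref{lem:monotone}), a step the paper's proof leaves implicit.
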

\begin{proof}
	This can be seen by simply enumerating the ${n\choose\Delta} \sim O(n^\Delta)$ possible ways of choosing a set of $\Delta$ jobs to share first and second-stage assignments. For each of the $O(n^\Delta)$ ways of fixing $\Delta$ jobs, the remaining problem can be solved using Algorithm~\ref{alg:eval} in $O(n\log n)$ time. Hence, the overall complexity is given by $O(n^{\Delta+1}\log n)$, i.e. polynomial for fixed $\Delta$.
\end{proof}

As observed in \cite{fischer2020investigation}, it is straightforward to see that problem (\ref{eqn:recsmsp}) can be solved in polynominal time for constant value of $\Delta$ by enumerating all possibilities for the intersection set $|X\cap Y|$. Note that whilst this approach would require us to check ${n^2\choose\Delta}$ many candidates for general cost functions, this number is reduced to ${n\choose\Delta}$ in our case.



%
%
%
%
%
%
%
%
%
%

We now consider a special case of (\ref{eqn:recsmsp}) where there are a constant number $k$ of possible job processing times, i.e. $p_j,\,q_j\in \{d_1,\dots,d_k\}$, where $d_1,\dots,d_k\in \mathbb{R}$ for constant $k$. Note that this results in at most $k^2$ possible combinations of first and second-stage costs $p_j$ and $q_j$, that is $k^2$ possible job types. Furthermore, $f(M)=f(M')$ for any two choices $M,M'\subseteq N$ that contain the same number of each job type. We can conclude the following result.

\begin{corollary}
Let $p_j,\,q_j\in \{d_1,\dots,d_k\}$ for a constant value of $k$. Then, (\ref{eqn:recsmsp}) can be solved in polynomial time $O(n^{k^2})$ 
\end{corollary}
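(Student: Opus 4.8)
The plan is to generalise the enumeration idea used above for constant $\Delta$, replacing the enumeration over all $\binom{n}{\Delta}$ candidate sets $M$ by an enumeration over \emph{type profiles}. First I would partition $N$ into classes $T_1,\dots,T_r$ with $r\le k^2$, where $j$ and $j'$ lie in the same class exactly when $(p_j,q_j)=(p_{j'},q_{j'})$; set $n_t=|T_t|$, so that $\sum_{t} n_t = n$. By the observation recorded just before the statement, $f(M)$ depends only on how many jobs of each type $M$ contains, so it suffices to consider, for each \emph{profile} $\bm m=(m_1,\dots,m_r)$ with $0\le m_t\le n_t$, a single representative set $M_{\bm m}$ realising it. Since the optimum of~(\ref{eqn:recsmsp}) equals $\min\{f(M):|M|\ge\Delta\}$, it equals the minimum of $f(M_{\bm m})$ over all profiles $\bm m$ with $\sum_t m_t\ge\Delta$.

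Next I would count the profiles: there are $\prod_{t=1}^{r}(n_t+1)$ of them, and since $\sum_t n_t=n$ and $r\le k^2$, the AM--GM inequality bounds this by $\big(\tfrac{n}{k^2}+1\big)^{k^2}=O(n^{k^2})$, the hidden constant depending only on $k$. So all profiles, together with the check $\sum_t m_t\ge\Delta$, can be enumerated within the claimed budget.

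The last ingredient -- and the one needing care to avoid an extra logarithmic factor from invoking Algorithm~\ref{alg:eval} from scratch on every profile -- is to evaluate $f(M_{\bm m})$ in time depending on $k$ only. For this I would exploit that, for a fixed profile, the vectors in Algorithm~\ref{alg:eval} are run-structured: the sorted $p$-values $\bm a$ of $N\setminus M_{\bm m}$ form a concatenation of at most $k$ constant runs, and likewise $\bm b$, so the still nondecreasing vector $\bm d=\bm a+\bm b$ has $O(k)$ runs; the multiset $\bm c$ of values $p_j+q_j$ for $j\in M_{\bm m}$ takes at most $k^2$ distinct values; hence the merged vector $\bm e$ consists of $O(k^2)$ constant runs, and $\sum_{i\in N}(n+1-i)e_i$ can be computed from the run lengths in $O(k^2)$ arithmetic operations using closed forms for sums of consecutive integers. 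After $O(n)$ one-time preprocessing (bucketing the jobs into classes, recording each $n_t$, and sorting the distinct values by $p$, by $q$, and by $p+q$), each profile is then evaluated in $O(k^2)=O(1)$ time, and summing over $O(n^{k^2})$ profiles gives total running time $O(n^{k^2})$.

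I expect the main obstacle to be exactly this constant-time evaluation: one must argue carefully that the number of runs in $\bm a$, $\bm b$, $\bm c$, and hence in $\bm d$ and $\bm e$, is bounded in terms of $k$ alone, and that passing from a profile to these run encodings costs only $O(k^2)$. A coarser implementation that simply calls Algorithm~\ref{alg:eval} on each $M_{\bm m}$ still runs in $O(n^{k^2+1}\log n)$ time, which already establishes polynomiality if the sharper exponent is not required.
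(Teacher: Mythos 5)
Your proof is correct, and its core is the same as the paper's: since $f(M)$ depends only on how many jobs of each of the at most $k^2$ types are fixed, you enumerate type profiles rather than sets $M$ and evaluate each candidate via the ordering rule of Algorithm~\ref{alg:eval}. The only real difference is how the $O(n^{k^2})$ budget is allocated. The paper restricts the enumeration to profiles whose entries sum to exactly $\Delta$ (which suffices, since the optimum is $\min\{f(M):|M|\ge\Delta\}$ and $f$ is monotone under adding jobs to $M$), so only $O((\Delta+1)^{k^2-1})\subseteq O(n^{k^2-1})$ profiles arise; it can then afford a linear-time evaluation per profile, observing that sorting $n$ values drawn from a bounded set takes $O(n)$ time, and no further machinery is needed. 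You instead enumerate all $O(n^{k^2})$ profiles with sum at least $\Delta$ (your AM--GM count $\prod_t(n_t+1)=O(n^{k^2})$ is fine) and compress each evaluation to $O(k^2)=O(1)$ arithmetic operations by exploiting the run structure of the vectors in Algorithm~\ref{alg:eval}; that argument is more delicate but sound: $\bm{a}$ and $\bm{b}$ each have at most $k$ constant runs, so $\bm{d}=\bm{a}+\bm{b}$ has at most $2k-1$, $\bm{c}$ and hence the merged vector $\bm{e}$ have $O(k^2)$ runs, each run's weighted contribution has a closed form, and passing from a profile to these run encodings costs $O(k^2)$ after $O(n)$ preprocessing. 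Either accounting yields the claimed $O(n^{k^2})$; your fallback of rerunning Algorithm~\ref{alg:eval} from scratch per profile would indeed only give $O(n^{k^2+1}\log n)$, and the paper's exact-$\Delta$ restriction plus counting sort is the simpler route to shaving that extra factor.
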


\begin{proof}
We consider the number of ways there are to choose $\Delta$ jobs from $k^2$ different job types. Observe that for all but the final job type, there are $\Delta+1$ ways of choosing up to $\Delta$ jobs of that type. Having chosen the number of jobs from the first $k^2-1$ job types, the number of jobs of the final type is simply equal to the number of jobs remaining. Hence there are $O((\Delta+1)^{k^2-1})$ ways to choose the $\Delta$ jobs. 
For each of these choices, we can use the ordering rule presented in Algorithm~\ref{alg:eval} to solve the resulting problem. Note that sorting $n$ bounded values is possible in $O(n)$ time. Hence (\ref{eqn:recsmsp}) with $k$ possible job processing times is solvable in $O((\Delta+1)^{k^2-1}\cdot n)\sim O(n^{k^2})$ time.
\end{proof}

\section{A 2-approximation algorithm} \label{section:2approx}

In this section, we present an approximate solution to (\ref{eqn:recsmsp}) and prove that this solution has a guaranteed worst-case approximation ratio of 2.

\begin{theorem}\label{thm:2approx}
A solution to the problem
\begin{align*}
f(N) = \min\,&\sum_{i\in N}\sum_{j\in N}p_j(n+1-i)x_{ij}+ \sum_{i\in N}\sum_{j\in N}q_j(n+1-i)y_{ij}\tag{UB-SMSP}\label{eqn:UB}\\
&|X\cap Y| = n\\
&\bm{x},\,\bm{y}\in \mathcal{X}
\end{align*}
provides a 2-approximation to (\ref{eqn:recsmsp}).
\end{theorem}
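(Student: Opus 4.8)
The plan is to reduce the statement to a comparison of two well-understood quantities and then to one combinatorial inequality. By the discussion preceding Theorem~\ref{obs1}, $\mathrm{OPT}=\min\{f(M):|M|\ge\Delta\}$, and $f$ is monotone under inclusion (fixing more jobs is more restrictive), so $f(\emptyset)\le \mathrm{OPT}\le f(N)$; since the value of (UB-SMSP) is exactly $f(N)$, it suffices to prove $f(N)\le 2f(\emptyset)$. By Theorem~\ref{obs1}, $f(N)$ is the sum of completion times of the SPT schedule for the processing times $\bm p+\bm q$, while $f(\emptyset)$ is the SPT value for $\bm p$ plus the SPT value for $\bm q$.

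Next I would use the identity that the sum of completion times of the SPT schedule for processing times $\bm t$ equals $\sum_j t_j+\sum_{j<k}\min(t_j,t_k)$. Applying this to $\bm t=\bm p+\bm q$, to $\bm t=\bm p$, and to $\bm t=\bm q$, the terms $\sum_j(p_j+q_j)$ cancel and
\[ f(N)-f(\emptyset)=\sum_{j<k}\Bigl(\min(p_j+q_j,\,p_k+q_k)-\min(p_j,p_k)-\min(q_j,q_k)\Bigr). \]
A short case distinction (assume $p_j\le p_k$ and split on whether $q_j\le q_k$) shows the $\{j,k\}$-term is $0$ whenever $\bm p$ and $\bm q$ order $j$ and $k$ the same way, and equals $\min(|p_j-p_k|,\,|q_j-q_k|)$ otherwise; call such pairs \emph{discordant}. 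So the theorem reduces to $\sum_{\{j,k\}\text{ discordant}}\min(|p_j-p_k|,\,|q_j-q_k|)\le f(\emptyset)=\mathrm{SPT}(\bm p)+\mathrm{SPT}(\bm q)$.

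The hard part is exactly this last inequality, and it is where no purely local estimate suffices: replacing $\min(|p_j-p_k|,|q_j-q_k|)$ by $\tfrac12(|p_j-p_k|+|q_j-q_k|)$, or bounding a job's SPT-rank under $\bm p+\bm q$ by the sum of its ranks under $\bm p$ and under $\bm q$, loses more than a factor $2$ on the instance with $n/2$ jobs of type $(p,q)=(0,1)$ and $n/2$ of type $(1,0)$ — which, incidentally, has $f(N)/f(\emptyset)\to 2$, so the ratio is tight. I would therefore sort the jobs by $\bm p$ and charge each discordant pair's contribution either to the $\bm p$-gaps or to the $\bm q$-gaps it spans, choosing the side so that no gap is overcharged: a $\bm p$-gap high in the $\bm p$-order can be heavily charged only if the jobs above it are discordant with many jobs below it, which forces large $\bm q$-values there and hence a matching $\mathrm{SPT}(\bm q)$ contribution (where those pairs are charged cheaply), and symmetrically; balancing the two keeps each side within the corresponding SPT cost. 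An equivalent route, avoiding the reduction through $f(\emptyset)$, is to take an optimal $(\bm x^\star,\bm y^\star)$ of (RecSMSP) with stage costs $P^\star,Q^\star$: using $\bm x^\star$ for both stages is feasible for (UB-SMSP), so $f(N)\le P^\star+\widetilde Q$ where $\widetilde Q$ is the second-stage cost under $\bm x^\star$, and symmetrically $f(N)\le\widetilde P+Q^\star$; since $\bm x^\star$ is itself a feasible second-stage schedule, $\widetilde Q\ge Q^\star$ and $\widetilde P\ge P^\star$, and the same combinatorial fact yields $\min(\widetilde Q-Q^\star,\ \widetilde P-P^\star)\le P^\star+Q^\star=\mathrm{OPT}$, which closes the argument.
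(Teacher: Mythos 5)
Your reduction is correct and genuinely different from the paper's route: using $\mathrm{SPT}(\bm{t})=\sum_j t_j+\sum_{j<k}\min(t_j,t_k)$ you get the clean identity
\[
f(N)-f(\emptyset)=\sum_{\{j,k\}\ \text{discordant}}\min\bigl(|p_j-p_k|,\,|q_j-q_k|\bigr),
\]
and since $f(\emptyset)$ is the value of \eqref{eqn:LB}, a lower bound on \eqref{eqn:recsmsp}, the theorem would indeed follow from $\sum_{\text{discordant}}\min(|p_j-p_k|,|q_j-q_k|)\le \mathrm{SPT}(\bm{p})+\mathrm{SPT}(\bm{q})$. You also correctly observe that this inequality is tight (the $0$--$1$ fully-crossed instances) and that no per-pair estimate can prove it. But that inequality is precisely the crux of the theorem, and you do not prove it. The ``charging'' paragraph is a plan, not an argument: you never define the charge assigned by a discordant pair to a $\bm{p}$-gap or $\bm{q}$-gap, nor verify that the total charge received by any gap is at most its contribution to $\mathrm{SPT}(\bm{p})+\mathrm{SPT}(\bm{q})$; the phrase ``balancing the two keeps each side within the corresponding SPT cost'' is exactly the statement that needs proof. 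This is the point where the paper does its real work: Lemma~\ref{lemma:crossed} shows (via an exchange argument on the LP dual of \eqref{eqn:UB}, using concavity of minima of affine functions) that a worst-case instance for $UB/LB$ is fully crossed, and the proof of Theorem~\ref{thm:2approx} then dualises the max--min ratio problem and exhibits an explicit feasible dual solution of value $\gamma=2$ (the alternating outermost assignment satisfying constraint~\eqref{eqn:feas}). Your proposal contains no substitute for this step.

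The ``equivalent route'' at the end does not close the gap either. With $(\bm{x}^\star,\bm{y}^\star)$ optimal for \eqref{eqn:recsmsp} and $\Delta>0$, the schedules $\bm{x}^\star$ and $\bm{y}^\star$ are in general not SPT orders for $\bm{p}$ and $\bm{q}$, so the pairwise identity you derived (which is specific to SPT schedules and to $f(\emptyset)$) does not apply, and the claim $\min(\widetilde Q-Q^\star,\ \widetilde P-P^\star)\le P^\star+Q^\star$ is simply asserted; it is in fact a \emph{stronger} statement than the theorem (it bounds $\min\{(\bm{p}+\bm{q})\text{-cost}(\bm{x}^\star),(\bm{p}+\bm{q})\text{-cost}(\bm{y}^\star)\}$ by $2\,\mathrm{OPT}$, not just $f(N)$), so ``the same combinatorial fact'' cannot be invoked for it. In short: the set-up and the identification of the tight instances are sound and nicely complementary to the paper's LP-duality treatment, but the central inequality remains unproven in both of your routes, so the proposal as written does not establish Theorem~\ref{thm:2approx}.
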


Observe that (\ref{eqn:UB}) is equivalent to (\ref{eqn:recsmsp}) with complete intersection of the first and second-stage solutions. It can thus be simplified to
\[
\min_{\bm{x}\in \mathcal{X}}\,\sum_{i\in N}\sum_{j\in N}(p_j+q_j)(n+1-i)x_{ij}
\]
Note that (\ref{eqn:UB}) can be solved in $O(n\log n)$ time by simply ordering the jobs $j\in N$ according to non-decreasing $p_j+q_j$ values. 

In the following, we provide a proof for Theorem~\ref{thm:2approx}. To this end, we consider a lower bound to (\ref{eqn:recsmsp}), which is provided by a solution to the problem 
\begin{equation}
\left(\min_{\bm{x}\in \mathcal{X}}\,\sum_{i\in N}\sum_{j\in N} p_j(n+1-i)x_{ij}\right)\ +\  \left(\min_{\bm{y}\in \mathcal{X}}\sum_{i\in N}\sum_{j\in N}q_j(n+1-i)y_{ij}\right),\tag{LB-SMSP}\label{eqn:LB}
\end{equation}
that is, (\ref{eqn:recsmsp}) without the intersection constraint. Suppose $\sigma_p$ and $\sigma_q$ are orderings of jobs $j\in N$ by non-decreasing first-stage costs $p_j$ and non-decreasing second-stage costs $q_j$, respectively. Then (\ref{eqn:LB}) has objective value equal to
\[ \sum_{j\in N}(p_{\sigma_p(j)}+q_{\sigma_q(j)})(n+1-j). \]
We compare the upper bound provided by (\ref{eqn:UB}) with the lower bound provided by (\ref{eqn:LB}) and refer to the values of these upper and lower bounds as $UB$ and $LB$, respectively.

By sorting items by $\sigma_p$ and $\sigma_q$ and considering the different positions of specific items between these two sortings, we get a different perspective on what constitutes an instance of (\ref{eqn:recsmsp}). Specifically, we can consider sorted processing times $p_1\le p_2 \le \ldots \le p_n$ and $q_1 \le q_2 \le \ldots \le q_n$, and a permutation $\pi$, such that the processing times $p_j$ and $q_{\pi(j)}$ belong to the same job. Figure~\ref{fig:sol_ex} provides an example of a (\ref{eqn:recsmsp}) considered in this way.

\begin{figure}[h]
\centering
\includegraphics[scale=0.65]{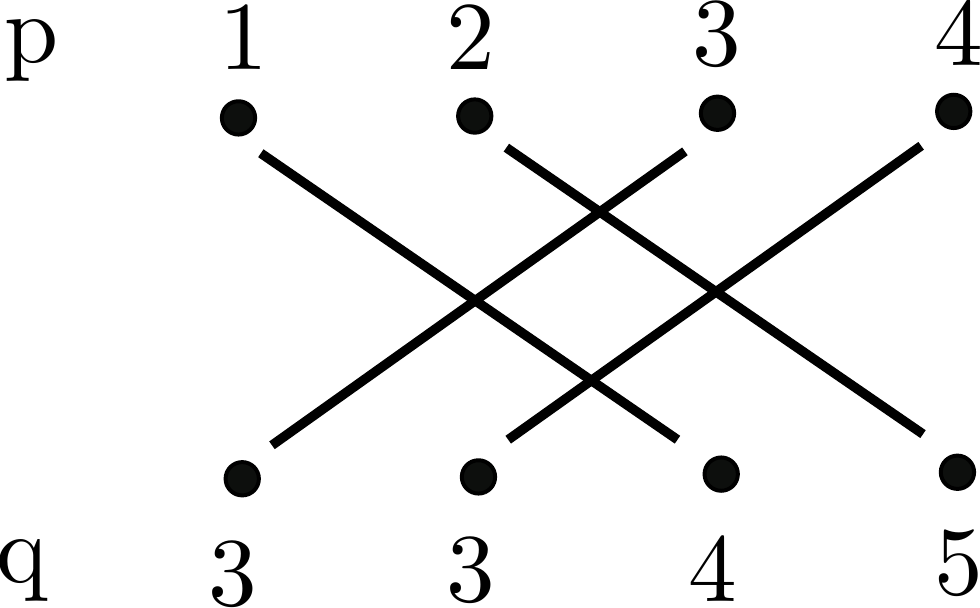}
\caption{An example instance of (\ref{eqn:recsmsp}) involving four jobs. First-stage processing times are shown in the top row and second-stage processing times are shown in the bottom row. An edge between first-stage node and second-stage node indicates that these processing times belong to a single job.}
\label{fig:sol_ex}
\end{figure}

Using this description of an instance, we begin by first proving the following structural lemma, which shows that a worst-case instance is of a form where the smallest value of $\bm{p}$ is matched with the largest value of $\bm{q}$, the second smallest value of $\bm{p}$ is matched with the second-largest value of $\bm{q}$, and so on.
 
\begin{lemma}\label{lemma:crossed}
Let any instance of (\ref{eqn:recsmsp}) be given, consisting of sorted vectors $\bm{p}$, $\bm{q}$ and a permutation $\pi$. Then the ratio $\frac{UB}{LB}$ for this instance is less or equal to the ratio for the instance where $\pi$ is replaced by $\pi'$, where $\pi'=(n,n-1,\ldots,1)$, i.e., $\pi'$ is a sorting of indices from largest to smallest. We refer to such an instance as being `fully-crossed'. 
\end{lemma}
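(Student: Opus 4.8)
The plan is to show that the transformation from $\pi$ to $\pi'=(n,n-1,\ldots,1)$ never decreases $UB/LB$, by arguing that it can be achieved through a sequence of elementary swaps, each of which is non-decreasing in the ratio. Since $LB = \sum_j (p_{\sigma_p(j)} + q_{\sigma_q(j)})(n+1-j)$ is computed from the \emph{sorted} vectors $\bm{p}$ and $\bm{q}$ independently of $\pi$, the lower bound $LB$ is invariant under any change of $\pi$. Hence it suffices to show that the upper bound $UB = f(N)$, obtained by sorting jobs by non-decreasing $p_j + q_j$, only increases (weakly) as we move from $\pi$ to $\pi'$. So the whole lemma reduces to a statement purely about $UB$: among all permutations $\pi$ pairing the sorted $\bm p$ with the sorted $\bm q$, the reverse permutation $\pi'$ maximises $\sum_{i} (n+1-i) e_i$, where $\bm e$ is the non-decreasing sort of the multiset $\{p_j + q_{\pi(j)}\}$.

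First I would set up the swap argument. Take any $\pi \neq \pi'$; then there exist indices $k < \ell$ with $\pi(k) < \pi(\ell)$ (an ``uncrossed'' pair), because $\pi'$ is the unique permutation with no such pair. Define $\pi''$ to be $\pi$ with the values at $k$ and $\ell$ exchanged, so $\pi''(k) = \pi(\ell)$, $\pi''(\ell) = \pi(k)$, and $\pi'' = \pi$ elsewhere. The key inequality to establish is $UB(\pi'') \ge UB(\pi)$. Repeating this (each swap strictly reduces the number of inversions relative to $\pi'$, which is a nonnegative integer) drives $\pi$ to $\pi'$ in finitely many steps, each weakly increasing $UB$, which proves the claim.

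The main obstacle is precisely this elementary-swap inequality, because $UB$ is not a linear function of the pair-sums: it involves re-sorting the multiset of values $p_j + q_{\pi(j)}$ and then weighting by $(n+1-i)$. The two pair-sums that change under the swap are $s_1 = p_k + q_{\pi(k)}$, $s_2 = p_\ell + q_{\pi(\ell)}$ (before) versus $t_1 = p_k + q_{\pi(\ell)}$, $t_2 = p_\ell + q_{\pi(k)}$ (after); note $s_1 + s_2 = t_1 + t_2$, and since $p_k \le p_\ell$ and $q_{\pi(k)} \le q_{\pi(\ell)}$ the new pair $\{t_1, t_2\}$ is \emph{more spread out} than $\{s_1, s_2\}$ — formally, $\{t_1,t_2\}$ majorises $\{s_1,s_2\}$. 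The task is to show that replacing two entries of a multiset by a more-spread pair with the same sum cannot decrease $\sum_i (n+1-i) e_i$ after re-sorting. I would prove this by a direct case analysis on where $s_1, s_2$ and $t_1, t_2$ land in the sorted order: all the other $n-2$ values are unchanged, so only the relative order of these two entries among the fixed ones matters. Writing the sorted sequence with the two ``mobile'' values inserted, one checks that pushing the smaller value down past some fixed values and the larger value up past the same number of fixed values (which is what spreading does) reassigns larger coefficients $(n+1-i)$ to smaller values and smaller coefficients to larger values in a way that, by a telescoping/rearrangement argument, weakly increases the weighted sum. A clean way to package this is: $\sum_i (n+1-i)e_i = \sum_i \sum_{i' \le i} e_{i'} = \sum_i \big(\text{sum of the } i \text{ smallest values}\big)$, and the sum of the $i$ smallest values of a multiset is a concave, Schur-concave-in-reverse (i.e. order-reversing under majorisation) function — spreading a pair can only decrease each partial sum of smallest elements, hence \emph{increase} $\sum_i (n+1-i)e_i$ since that equals $\sum_{i} (\text{total}) - \sum_i(\text{sum of } i \text{ smallest}) $ up to the constant total weighted appropriately; I would verify the exact bookkeeping here, as this is the one place a sign error is easy to make.

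Finally I would note the degenerate cases cause no trouble: if $\pi(k) = \pi(\ell)$ is impossible since $\pi$ is a permutation, and if some of $p$'s or $q$'s are equal the ``sorted'' vectors are still well-defined and the majorisation argument goes through with weak inequalities throughout. This completes the reduction to fully-crossed instances, which is what the subsequent approximation-ratio bound will exploit.
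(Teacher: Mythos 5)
Your overall strategy is legitimate and genuinely different from the paper's: you keep everything primal (LB is indeed independent of $\pi$, and a finite sequence of pairwise crossings reaching $\pi'$ is fine), whereas the paper proves the single-swap inequality $UB_{\text{crossed}}\ge UB_{\text{uncrossed}}$ by taking an optimal dual solution of the assignment LP for the uncrossed instance, perturbing $u_1,u_2$ into a feasible dual solution for the crossed instance, and invoking concavity of $g(\lambda)=\min_i f_i(\lambda)+\min_i f_i(1-\lambda)$. A sorting/majorisation argument in place of LP duality would be a nice, more elementary route. However, as written your key step is wrong in both of its directions. First, the swap does \emph{not} spread the pair sums: with $p_k\le p_\ell$ and $q_{\pi(k)}\le q_{\pi(\ell)}$ you have $s_1=p_k+q_{\pi(k)}\le\min\{t_1,t_2\}$ and $s_2=p_\ell+q_{\pi(\ell)}\ge\max\{t_1,t_2\}$, so $\{s_1,s_2\}$ majorises $\{t_1,t_2\}$; crossing is a Robin Hood transfer of $\delta=q_{\pi(\ell)}-q_{\pi(k)}$ from the larger sum to the smaller, i.e.\ it \emph{concentrates} the pair. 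Second, the lemma you reduce to, ``replacing two entries by a more spread-out pair with the same sum cannot decrease $\sum_i(n+1-i)e_i$,'' is false: for $n=2$, replacing $\{1,1\}$ (objective $2\cdot 1+1\cdot 1=3$) by $\{0,2\}$ (objective $2\cdot 0+1\cdot 2=2$) strictly decreases it. The sign error you feared is real and sits in your final rewriting: your identity $\sum_i(n+1-i)e_i=\sum_{k=1}^n S_k$, with $S_k$ the sum of the $k$ smallest entries, is correct, but it is \emph{not} ``total minus the partial sums,'' so decreasing the $S_k$ decreases the objective rather than increasing it.

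The good news is that the two reversals cancel, so your skeleton is repairable: crossing concentrates the pair sums, each $S_k$ is a minimum of linear symmetric functions of the multiset, hence concave and symmetric, hence Schur-concave, so a Robin Hood transfer weakly increases every $S_k$ and therefore $\sum_k S_k=UB$; iterating over uncrossed pairs (each transposition strictly increases the inversion count, so the process terminates at $\pi'$) gives the lemma. But as submitted, the central claim you rely on would fail, so the proof does not stand without this correction.
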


\begin{proof}
	Matching a first-stage cost to a second-stage cost respresents a job having those respective processing times. Here, we match first and second-stage costs with the objective of maximising the ratio $\frac{UB}{LB}$. To this end, observe that $LB$ is the same for all possible matchings $\pi$ as the intersection constraint is ignored, and hence finding a matching to maximise the ratio $\frac{UB}{LB}$ is equivalent to finding a matching that maximises $UB$.

Let us consider the dual problem of $UB$, given by
\begin{align*}
\max\ & \sum_{i\in N} u_i + v_i \\
\text{s.t. } & u_j + v_i \le (n+1-i)(p_j + q_{\pi(j)}) & \forall i,j\in N 
\end{align*}
Consider any two jobs, and suppose that first and second-stage costs are labelled so that $p_1\leq p_2$ and $q_1\leq q_2$. We compare the objective values between the case when $p_1$ is matched to $q_1$ and $p_2$ is matched to $q_2$, i.e. the two jobs are uncrossed (see Figure~\ref{fig2a}), and the case when $p_1$ is matched to $q_2$ and $p_2$ is matched to $q_1$, i.e. the two jobs are crossed (see Figure~\ref{fig2b}).

\begin{figure}[htb]
	\centering
	\begin{subfigure}{.5\textwidth}
		\centering
		\includegraphics[width=.35\linewidth]{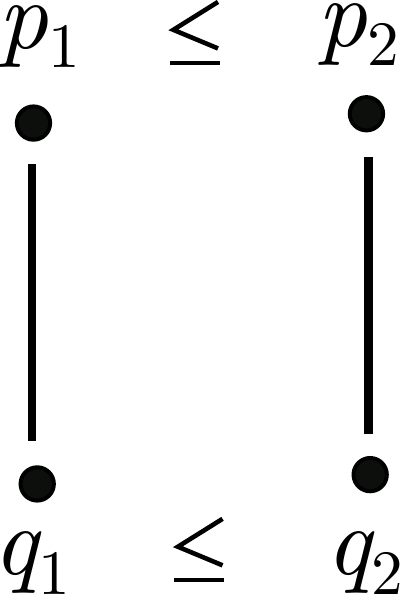}
		\caption{Uncrossed matching.}\label{fig2a}
	\end{subfigure}%
	\begin{subfigure}{.5\textwidth}
		\centering
		\includegraphics[width=.35\linewidth]{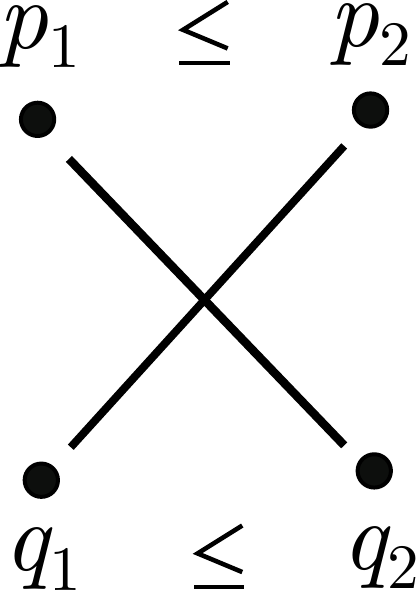}
		\caption{Crossed matching.}\label{fig2b}
	\end{subfigure}
	\caption{Uncrossed and crossed instances.}
\end{figure}

Let $(\bm{u}^*, \bm{v}^*)$ be an optimal solution to the dual of (\ref{eqn:UB}) for the instance in which the two jobs are uncrossed, and let us denote its objective value as $UB_{\text{uncrossed}}$. Note that
\begin{align*}
u^*_1 &= \min_{i\in N} \left\{ (n+1-i)(p_1+q_1)-v^*_i \right\} \\
u^*_2 &= \min_{i\in N} \left\{ (n+1-i)(p_2+q_2)-v^*_i \right\}.
\end{align*}
Now consider the dual of (\ref{eqn:UB}) for the instance in which the two jobs are crossed. We denote its objective value as $UB_{\text{crossed}}$ and construct a feasible solution $(\bm{u},\bm{v})$ by setting $\bm{v} = \bm{v}^*$,
\begin{align*}
u_1 &= \min_{i\in N} \left\{ (n+1-i)(p_1+q_2)-v^*_i \right\} \\
u_2 &= \min_{i\in N} \left\{ (n+1-i)(p_2+q_1)-v^*_i \right\}
\end{align*}
and $u_j = u^*_j$ for all other jobs $j$. Since this is a feasible solution, we have $UB_{\text{crossed}} \ge \sum_{i\in N} u_i + v_i$. 

Now for $\lambda\in[0,1]$, we consider the function $f_i(\lambda) = (n+1-i)(\lambda (p_1+q_1) + (1-\lambda) (p_2+q_2)) - v_i^*$ and define $g(\lambda) = \min_{i\in N} f_i(\lambda) + \min_{i\in N} f_i(1-\lambda)$. The minimum of concave functions is concave, and the sum of concave functions is concave, and therefore function $g$ is concave. Furthermore, it is symmetric with respect to $\lambda=0.5$. Hence, $g$ is minimised for $\lambda=0$ and $\lambda=1$, and thus for all $\lambda\in[0,1]$, $g(\lambda) \ge g(0) = g(1)$.

We therefore conclude that
\begin{align*}
UB_{\text{crossed}} - UB_{\text{uncrossed}} &\ge \sum_{i\in N} u_i + v_i - \sum_{i\in N} u^*_i + v^*_i\\
&= u_1 + u_2 - u^*_1 - u^*_2 \\
&= g(\lambda) - g(0) \ge 0
\end{align*}
where $\lambda = (p_2-p_1)/(p_2+q_2-p_1-q_1)$. Hence, by changing the matching $\pi$ as described, the value of the upper bound is not decreased. Repeating this process, we find that there is always a fully-crossed worst-case instance as claimed.
\end{proof}

We are now in a position to prove Theorem~\ref{thm:2approx}.

\begin{proof}[Proof of Theorem~\ref{thm:2approx}]
	We suppose that the first and second-stage costs are ordered so that $p_1\leq p_2\leq\dots \leq p_n$ and $q_1\leq q_2\leq \dots \leq q_n$, and that the jobs are labelled in order of their first-stage costs. Making use of Lemma \ref{lemma:crossed}, we restrict our consideration to instances where the matching of first and second-stage costs is fully crossed, i.e. job $j$ has the $j$-th largest first-stage cost $p_j$, and the $(n-j)$-th largest second-stage cost $q_{n-j}$.  

	We examine the problem of choosing values 
	$\bm{p}$ and $\bm{q}$ to maximise $\frac{UB}{LB}$, i.e.
	\begin{equation}
		\max_{(\bm{p},\bm{q})\in \mathcal{PQ}}\,\min_{\pi\in \Pi}\frac{\sum_{i\in N}(n+1-i)(p_{\pi(i)}+q_{n-\pi(i)})}{\sum_{i\in N}(n+1-i)(p_i+q_i)}, \label{wcprob}
	\end{equation}
	where $\pi\in\Pi$ is a permutation of the positions $i\in N$ such that $\pi(i)$ is the job scheduled in position $i$, and $\mathcal{PQ}=\{(\bm{p},\bm{q})\in\mathbb{R}^{n+n}_+:p_1\leq p_2 \leq \dots \leq p_n,\,q_1 \leq q_2 \leq \dots \leq q_n\}$.

	Normalising the objective function we get
	\begin{align*}
		\max_{(\bm{p},\bm{q})\in \overline{\mathcal{PQ}}}\,\min_{\pi\in \Pi}\sum_{i\in N}(n+1-i)(p_{\pi(i)}+q_{n-\pi(i)}),
	\end{align*}
	where $\overline{\mathcal{PQ}}=\{(\bm{p},\bm{q})\in\mathcal{PQ}:\sum_{i\in N}(n+1-i)(p_i+q_i)=1\}.$ For fixed $\bm{p}$ and $\bm{q}$, the inner minimisation problem over the set of permutations $\pi \in \Pi$ can be written as the following assignment problem
	\begin{align*}
		\min&\,\sum_{i\in N}\sum_{j\in N}(n+1-i)(p_j+q_{n-j})x_{ij}\\
		\textnormal{s.t.}\,&\sum_{i\in N}x_{ij}=1\quad \forall j\in N\\
				   &\sum_{j\in N}x_{ij}=1\quad \forall i\in N\\
				   &x_{ij}\in\{0,1\} \quad \forall i,\,j\in N,
	\end{align*}
	where $x_{ij}=1$ indicates that job $j$ is scheduled in position $i$, and $x_{ij}=0$ otherwise. Note that the binary constraints on the $x_{ij}$ variables can be relaxed to $x_{ij}\geq 0$ for all $i,\,j\in N$. Hence, taking the dual of this linear program, we get
	\begin{align*}
		\max\,& \sum_{i\in N}u_i + v_i\\
		\textnormal{s.t.}\,& u_j + v_i \leq (n+1-i)(p_j+q_{n-j})\quad \forall i,\,j\in N,
	\end{align*}
	and therefore the full problem~\eqref{wcprob} can be written as
	\begin{align*}
		\max\,& \sum_{i\in N}u_i + v_i\\
		\textnormal{s.t.}\,& \sum_{i\in N}(n+1-i)(p_i+q_i)=1&(\gamma)\\
				   & u_j + v_i \leq (n+1-i)(p_j+q_{n-j})\quad \forall i,\,j\in N&(\bm{x})\\
				   & p_i \leq p_{i+1} \quad \forall j\in N\setminus\{n\}&(\bm{\alpha})\\
				   & q_i \leq q_{i+1} \quad \forall j\in N\setminus\{n\}&(\bm{\beta})\\
				   & p_i,\,q_i\geq 0 \quad \forall i\in N.
	\end{align*}
	Dualising this problem (the corresponding dual variables are shown to the right of each of the above constraints) gives
	\begin{align*}
		\min&\,\gamma\\
		\textnormal{s.t.}\,&\sum_{i\in N}x_{ij}=1\quad \forall j\in N&(\bm{u})\\
				   &\sum_{j\in N}x_{ij}=1\quad \forall i\in N&(\bm{v})\\
				   &\left.\begin{aligned}
				   &n\gamma - \sum_{i\in N}(n+1-i)x_{i1} + \alpha_1\geq 0\\
				   &(n+1-j)\gamma - \sum_{i\in N}(n+1-i)x_{ij} - \alpha_j + \alpha_{j+1} \geq 0 \quad \forall j\in N\setminus\{1,n\}\hspace{18mm}\\
				   &\gamma - \sum_{i\in N}(n+1-i)x_{in} - \alpha_n\geq 0\\
					   \end{aligned}\right\rbrace &(\bm{p})\\
				   &\left.\begin{aligned}
				   &n\gamma - \sum_{i\in N}(n+1-i)x_{in} + \beta_1\geq 0\\
				   &(n+1-j)\gamma - \sum_{i\in N}(n+1-i)x_{i,n+1-j} - \beta_j + \beta_{j+1} \geq 0 \quad \forall j\in N\setminus\{1,n\}\hspace{10mm}\\
				   &\gamma - \sum_{i\in N}(n+1-i)x_{i1} - \beta_n\geq 0\\
					   \end{aligned}\right\rbrace& (\bm{q})\\
				   &x_{ij}\geq 0 \quad \forall i,\,j\in N\\
				   &\alpha_i,\,\beta_i\geq 0 \quad \forall i\in N
	\end{align*}
	(again, the corresponding primal variables are shown to the right of each of the above constraints).

	We now proceed to show that there is a feasible solution to this dual problem with an objective value of 2. Since a feasible solution to the dual problem provides an upper bound to the primal problem, finding a feasible dual solution with an objective value of 2 guarantees that the ratio $\frac{UB}{LB}$ is bounded above by 2, and therefore proves that (\ref{eqn:UB}) does indeed provide a 2-approximation to (\ref{eqn:recsmsp}).

	Setting $\alpha_i=\beta_i=0$ for all $i\in N$, the dual problem becomes
	\begin{align}
		\min&\,\gamma\nonumber\\
		\textnormal{s.t.}\,&\sum_{i\in N}x_{ij}=1\quad \forall j\in N\nonumber\\
				   &\sum_{j\in N}x_{ij}=1\quad \forall i\in N\nonumber\\
				   &(n+1-j)\gamma \geq \sum_{i\in N}(n+1-i)x_{ij}\quad \forall j\in N\tag{*}\label{eqn:constr*}\\
				   &(n+1-j)\gamma \geq \sum_{i\in N}(n+1-i)x_{i,n+1-j}\quad \forall j\in N\tag{**}\label{eqn:constr**}\\
				   &x_{ij}\geq 0 \quad \forall i,\,j\in N\nonumber.
	\end{align}
	Observe that constraint (\ref{eqn:constr**}) can be rewritten as
	\begin{equation*}
		j\gamma \geq \sum_{i\in N}(n+1-i)x_{ij}\quad \forall j\in N \label{eqn:constr**v2}.
	\end{equation*}
	Suppose that $\gamma=2$. Then constraints (\ref{eqn:constr*}) and (\ref{eqn:constr**}) enforce that
	\begin{align*}
		2(n+1-j) &\geq n+1-\sum_{i\in N}ix_{ij}\quad \forall j\in N\\
		2j &\geq n+1-\sum_{i\in N}ix_{ij}\quad \forall j\in N,
	\end{align*}
	that is 
	\begin{equation}
		\sum_{i\in N}ix_{ij}\geq\max\{2j-n-1,\,n+1-2j\} \quad \forall j\in N\label{eqn:feas}\tag{\(\dagger\)}
	\end{equation}
	By considering the above constraint, we can determine an approach for generating a feasible dual solution with objective value $\gamma=2$. This approach works as follows: take the positions $i=n,\,n-1,\dots,1$ in decreasing order, and assign jobs $j\in N$ to them, alternating between the unassigned job with the largest index and the unassigned job with the smallest index. 
	
		\begin{table}[htb]	\centering
	\begin{tabular} {c  | r r r r r r}
		$j$ & 1 & 2 & 3 & 4 & 5 & 6 \\
		$\max\{2j-n-1,\,n+1-2j\}$ & 5 & 3 & 1 & 1 & 3 & 5 \\
	\end{tabular}
	\caption{Right-hand side of constraint~(\ref{eqn:feas}) for each $j\in N$ when $n=6$.}
	\label{table:constr_ex}
	\end{table}
	
	We illustrate this approach with an example for $n=6$. Table \ref{table:constr_ex} shows constraint (\ref{eqn:feas}) for each $j\in N$. $\sum_{i\in N}ix_{ij}$ is the position in the schedule that job $j$ is assigned to, and hence these constraints enforce that jobs 1 and 6 are scheduled in the final two positions, jobs 2 and 5 are scheduled in the two positions before that, and jobs 3 and 4 are scheduled in the first two positions. Thus there are two feasible positions for each job. Following the proposed approach for generating a feasible solution, we get that $x_{6,6}=1,\,x_{5,1}=1,\,x_{4,5}=1,\,x_{3,2}=1,\,x_{2,4}=1,\,x_{1,3}=1$.
	
	For any given instance, this approach can be used to generate a $UB$ solution with objective value no worse than twice the $LB$ solution provided by (\ref{eqn:LB}).
 
\end{proof}

We next consider the value of $\frac{UB}{LB}$ as a function of $n$ and show that the 2-approximation provided by the solution (\ref{eqn:UB}) becomes tight as $n\rightarrow\infty$. 

First, consider the case in which $n$ is even. We construct a fully-crossed instance with $p_i=q_i=0$ for $i\le n/2$ and $p_i = q_i = 1$ for $i \ge n/2+1$ (which we refer to as a fully-crossed 0-1 instance). Such an instance for $n=4$ is shown in Figure \ref{fig_exa}. The value of $UB$ is given by $\sum_{i\in N}(n+i-1)=\frac{(n+1)n}{2}$, whilst the value of $LB$ is given by 
$$\sum_{i\in N,\,i\geq \frac{n}{2}}2(n+1-i) = \frac{2(n/2+1)n/2}{2}=\frac{(n/2+1)n}{2},$$
and hence when $n$ is even, we have that 
$$\frac{UB}{LB}=\frac{n+1}{n/2+1}=\frac{2n+2}{n+2}\leq 2.$$

Now consider the case in which $n$ is odd. We slightly change the construction of this instance by setting $p_i=q_i=0$ for $i\le (n-1)/2$ and $p_i = q_i = 1$ for $i \ge (n+1)/2+1$ and $p_i = 0$, $q_i=1$ for $i=(n+1)/2$. This instance, for $n=5$, is shown in Figure \ref{fig_exb}. Again, the value of $UB$ is given by $\sum_{i\in N}(n+i-1)=\frac{(n+1)n}{2}$. However, when $n$ is odd, the value of $LB$ is given by
$$\frac{n+1}{2}+\sum_{i\in N,\,i\geq \frac{n-1}{2}}2(n+1-i) = \frac{n+1+2(\frac{n-1}{2}+1)(\frac{n-1}{2})}{2} =\frac{(n+1)^2}{4},$$
and therefore, when $n$ is odd, we have that 
$$\frac{UB}{LB} =\frac{2n(n+1)}{(n+1)^2}=\frac{2n}{n+1}\leq 2.$$
Thus, as $n\rightarrow\infty$, $\frac{UB}{LB}\rightarrow 2$.

We continue this analysis by looking at the range in which the true approximation ratio lies, as a function of $\Delta/n$. In Figure~\ref{fig:approx_ratio}, the line in blue shows the upper bound of 2 as stated by Theorem~\ref{thm:2approx}. The line in orange shows a lower bound of the approximation ratio, calculated as the actual approximation ratio for a specific instance, namely, a fully-crossed 0-1 instance with $n=100$. The true approximation ratio is known to lie between these upper and lower bounds, where we already know that for $\Delta=0$, the ratio 2 is tight, and for $\Delta=n$, (UB-SMSP) provides an optimal solution, i.e. the approximation guarantee becomes 1.

\begin{figure}[htb]
\centering
\includegraphics[scale=0.75]{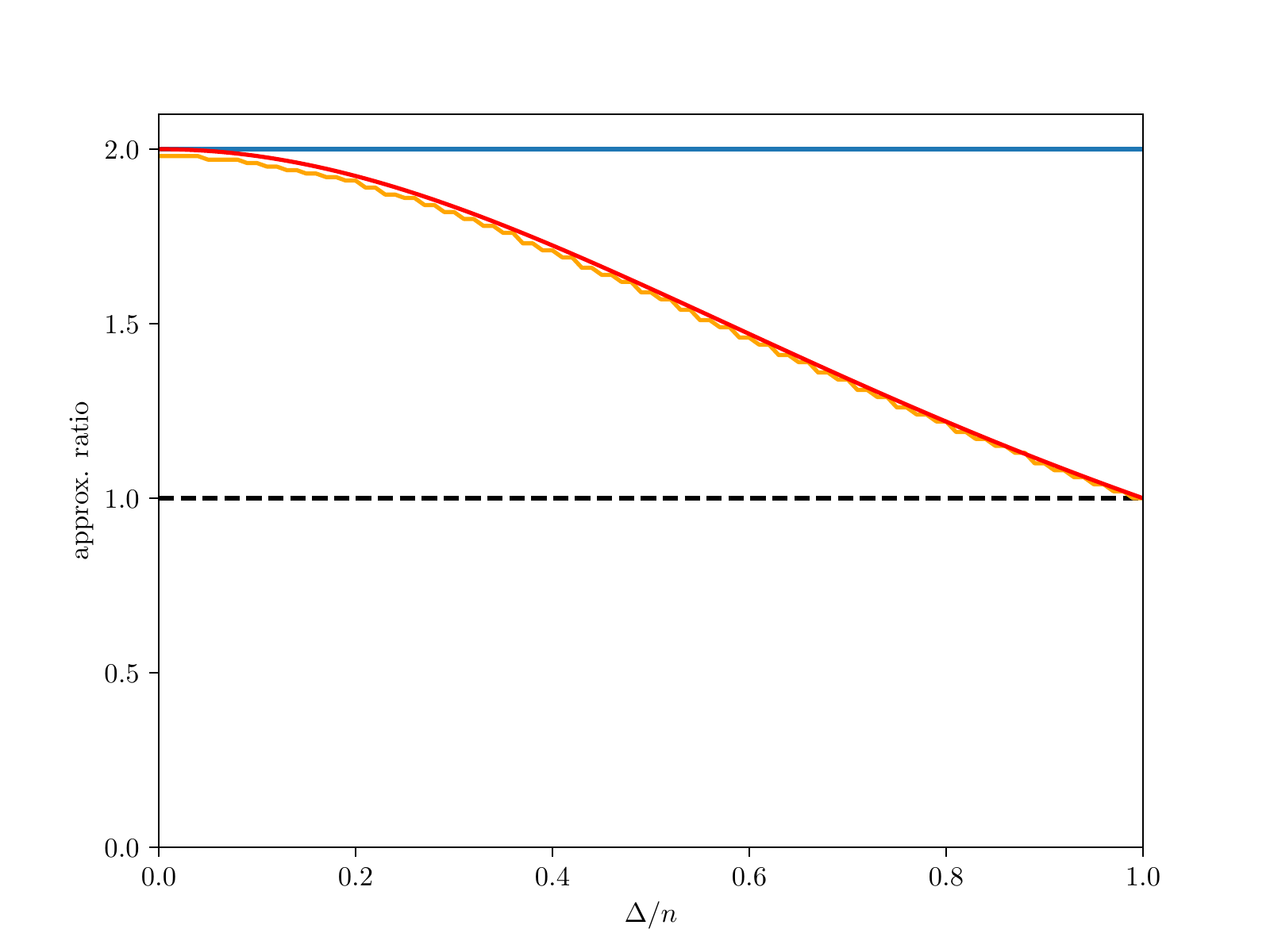}
\caption{The upper bound on the approximation ratio of (UB-SMSP), as proved in Theorem~\ref{thm:2approx}, is shown in blue. A lower bound on the approximation ratio as a function of $\Delta/n$, computed using a fully-crossed 0-1 instance with $n=100$ is shown in orange. The lower bound as a function of $\Delta/n$ for fully-crossed 0-1 instances as $n\rightarrow \infty$ is shown in red.}
\label{fig:approx_ratio}
\end{figure}

Observe that an optimal solution to a fully-crossed 0-1 instance will always use the available $\lfloor\frac{n-\Delta}{2}\rfloor$ swaps on the outer-most jobs. See Figure~\ref{fig:example} for an illustration of this. Knowing this, it becomes straightforward to compute the objective function of an optimal solution to such an instance. For example, in the case where $n-\Delta$ is even, the optimal objective value, $v^*$, is given by
\begin{align*}
	v^* =& \underbrace{\frac{2\Big(\frac{n-\Delta}{2}+1\Big)\frac{n-\Delta}{2}}{2}}_{\text{outer $(n-\Delta)/2$ swaps}} + \underbrace{\frac{\bigg(\left(\frac{n-\Delta}{2}+\Delta\right) + \left(\frac{n-\Delta}{2}+1\right)\bigg)\Delta}{2}}_{\text{inner $\Delta$ fixed jobs}}\\
		=&\frac{n^2+\Delta^2+2n}{4}.
	\end{align*}
Therefore, when $n-\Delta$ is even, the true approximation ratio can be computed as
	\begin{align*}
		\frac{UB}{v^*}=&\frac{2n(n+1)}{n^2+\Delta^2+2n}\\
		=&\frac{2n^2+2n}{(1+\gamma^2)n^2+2n}\rightarrow \frac{2}{1+\gamma^2}
	\end{align*}
as $n\rightarrow \infty$, where $\gamma=\Delta/n$. Note that a very similar analysis arrives at the same result for the case where $n-\Delta$ is odd. In Figure \ref{fig:approx_ratio}, we plot this limiting curve in red. Looking at the plot we see that for $\Delta = 0$, when the first and second solutions require no intersection, the upper and lower bounds converge as $n\rightarrow \infty$, confirming that the 2-approximation is tight. For $\Delta>0$, it remains possible that still stronger guarantees can be found.

\begin{figure}
	\centering
	\includegraphics[scale=0.5]{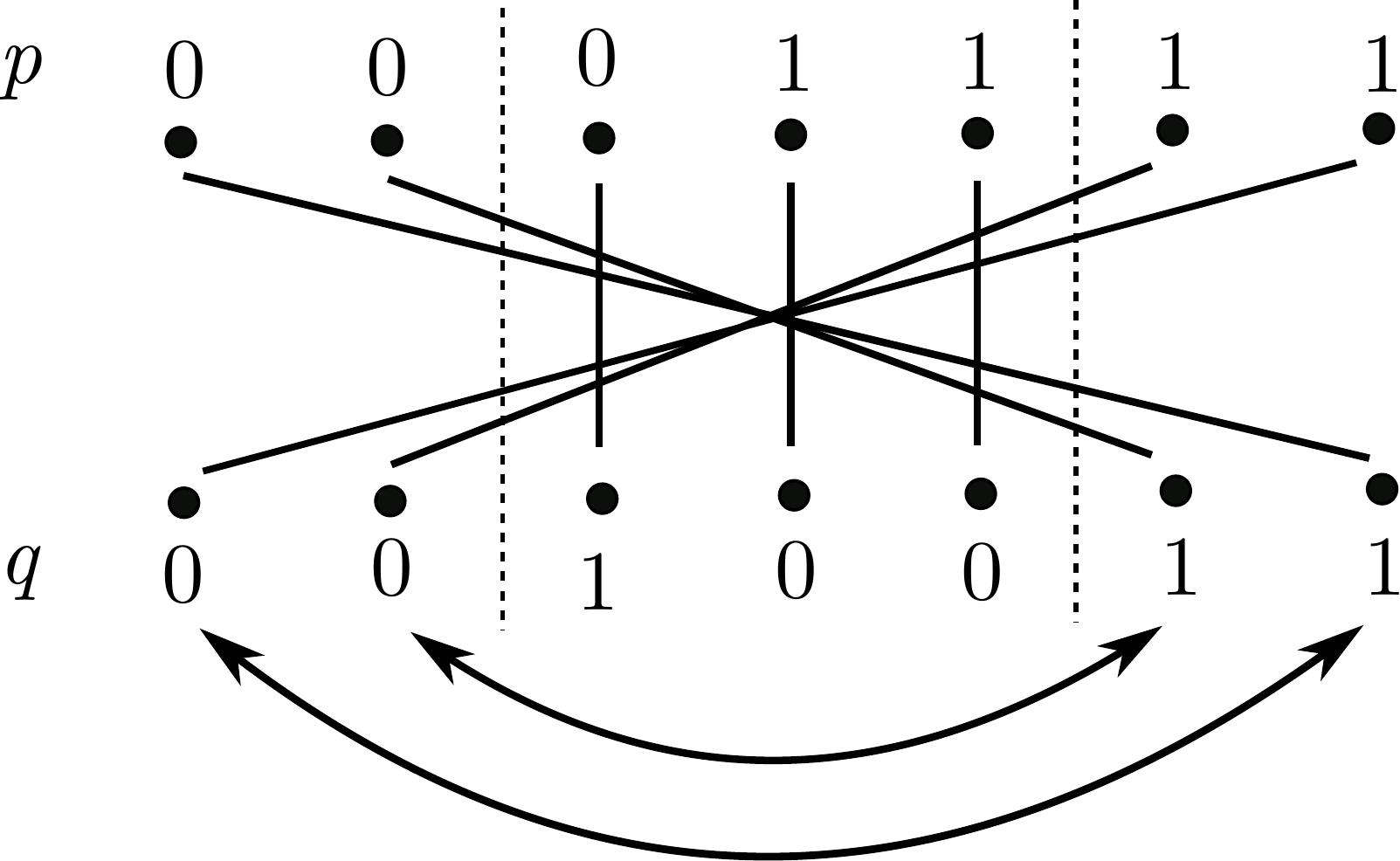}
	\caption{Optimal solution for a fully-crossed 0-1 instance where $n=7$ and $\Delta=3$. The $\lfloor\frac{n-\Delta}{2}\rfloor=2$ available swaps are applied to the outer-most jobs. This structure is shared by the optimal solutions to all fully-crossed 0-1 instances.}
	\label{fig:example}
	\end{figure}

(\ref{eqn:UB}) is found by forcing the first and second-stage schedules to be the same. Clearly, this approach is overly conservative, and in practice it is beaten by solutions in which only a subset of jobs share a position across the two stages. The following result shows that the 2-approximation guarantee still holds for such solutions.

\begin{figure}[htb]
	\centering
	\begin{subfigure}{.5\textwidth}
		\centering
		\includegraphics[width=.65\linewidth]{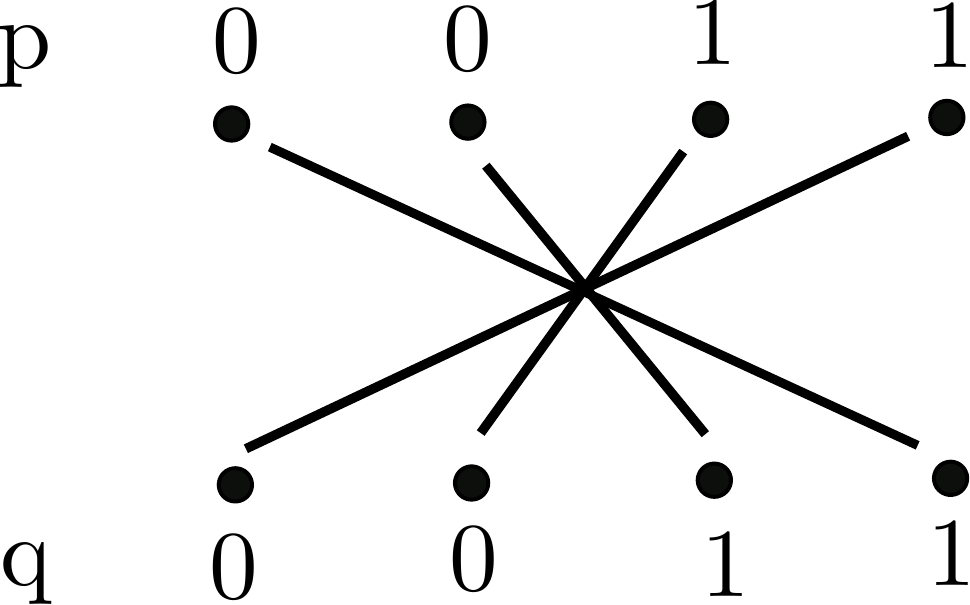}
		\caption{$n=4$}\label{fig_exa}
	\end{subfigure}%
	\begin{subfigure}{.5\textwidth}
		\centering
		\includegraphics[width=.8\linewidth]{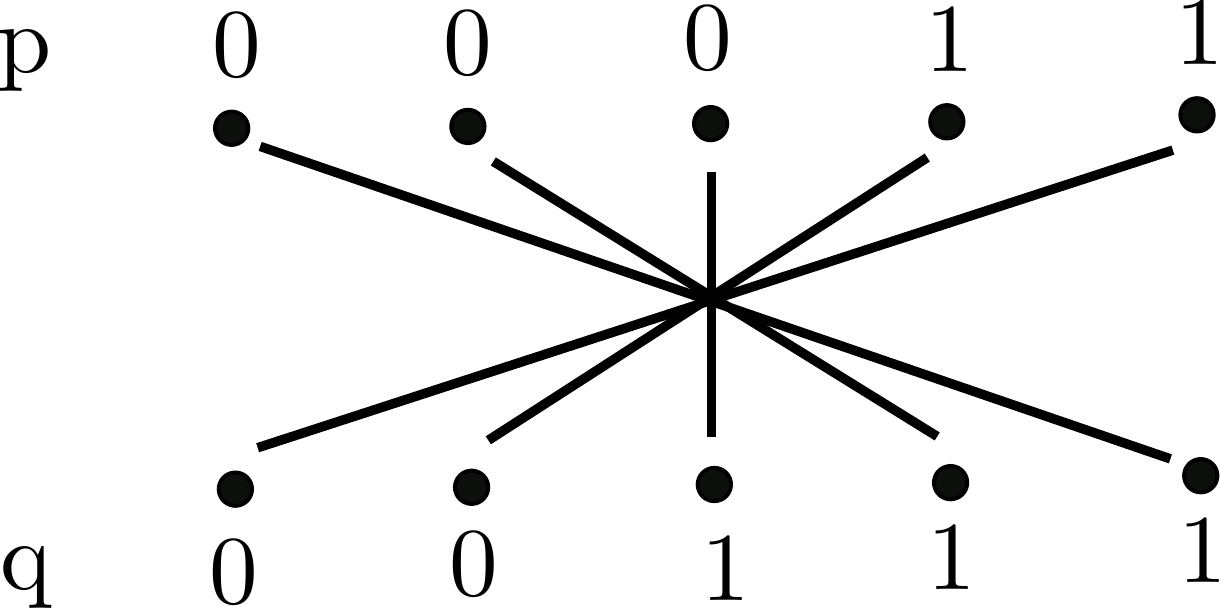}
		\caption{$n=5$}\label{fig_exb}
	\end{subfigure}
	\caption{Examples showing the constructed instances for $n=4$ and $n=5$.}
\end{figure}

\begin{lemma}\label{lem:monotone}
Function $f(M)$ is monotonically non-decreasing, i.e., $f(M) \le f(M\cup\{i\})$ for any $i\in N \setminus M$.
\end{lemma}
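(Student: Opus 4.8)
The plan is to argue entirely at the level of feasible regions, so that monotonicity of $f$ becomes the familiar fact that strengthening the constraints of a minimization problem cannot decrease its optimum. First I would record that for every $M\subseteq N$ the problem~\eqref{eqn:recfix} is feasible and its minimum is attained: taking $\bm{x}=\bm{y}$ equal to any assignment in $\X$ satisfies $x_{ij}=y_{ij}$ for all $i\in N$ and all $j\in M$, so $f(M)$ is a well-defined finite number.

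Next, fix $M\subseteq N$ and $i\in N\setminus M$ and compare the constraint sets of~\eqref{eqn:recfix} for $M$ and for $M\cup\{i\}$. Both require $\bm{x},\bm{y}\in\X$; the only difference is that the instance for $M\cup\{i\}$ additionally imposes $x_{ki}=y_{ki}$ for all $k\in N$, i.e.\ that job $i$ also keeps its position. Hence the family of equality constraints indexed by $j\in M\cup\{i\}$ contains the family indexed by $j\in M$, so every pair $(\bm{x},\bm{y})$ feasible for the $M\cup\{i\}$ instance is a fortiori feasible for the $M$ instance. The two objective functions are literally the same expression, so minimizing this common objective over the smaller feasible set can only produce a value at least as large: $f(M)\le f(M\cup\{i\})$, which is exactly the claim. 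Iterating over the elements of $M'\setminus M$ immediately upgrades this to $f(M)\le f(M')$ whenever $M\subseteq M'\subseteq N$, and in particular to $f(M)\le f(N)$, consistent with the role of $f(N)$ as the upper-bound quantity $UB$ in Theorem~\ref{thm:2approx}.

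I do not expect any real obstacle here: the only things worth spelling out are the non-emptiness of the feasible region (handled above) and the explicit inclusion of the constraint index sets. If one preferred a more combinatorial argument, an alternative would be to compare the cost vectors $\bm{e}$ produced by Algorithm~\ref{alg:eval} on inputs $M$ and $M\cup\{i\}$ and to show via a majorization-type argument that relocating $i$ from the ``free'' part to the ``fixed'' part cannot decrease $\sum_{i\in N}(n+1-i)e_i$; but this is distinctly more laborious than the feasibility observation, so I would present the latter.
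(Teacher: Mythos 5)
Your proposal is correct and uses exactly the argument in the paper: the instance of~\eqref{eqn:recfix} for $M\cup\{i\}$ has a superset of the equality constraints of the instance for $M$, so the latter is a relaxation of the former and its optimum can only be smaller or equal. The extra remarks on feasibility and on iterating to get $f(M)\le f(M')$ for $M\subseteq M'$ are fine but not needed beyond what the paper's one-line proof already establishes.
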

\begin{proof}
	The optimisation problem (\ref{eqn:recfix}) has constraints for all $j \in M$. The problem of solving (\ref{eqn:recfix}) for $M$ is therefore a relaxation of the problem of solving (\ref{eqn:recfix}) for $M\cup\{i\}$, and the claim follows.
\end{proof}

This means that any heuristic that gives a feasible pair of schedules based on Algorithm~\ref{alg:eval} is a 2-approximation as well.

\begin{corollary}\label{cor:2appx}
For any $M\subseteq N$ with $|M|\ge \Delta$, the solution generated by Algorithm~\ref{alg:eval} gives a 2-approximation for (\ref{eqn:recsmsp}) in polynomial time.
\end{corollary}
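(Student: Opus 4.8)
The plan is to sandwich the objective value $f(M)$ of the schedule returned by Algorithm~\ref{alg:eval} between the optimal value of (\ref{eqn:recsmsp}), which I denote $\mathrm{OPT}$, and twice that value, using only results already established.

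First I would record the lower inequality. By Theorem~\ref{obs1}, Algorithm~\ref{alg:eval} returns a pair of schedules $(\bm{x},\bm{y})$ attaining the value $f(M)$ and keeping every job of $M$ in a common position in both stages; hence its shared-position set satisfies $|X\cap Y|\ge |M|\ge\Delta$, so $(\bm{x},\bm{y})$ is feasible for (\ref{eqn:recsmsp}). Being a feasible solution of a minimisation problem, it gives $\mathrm{OPT}\le f(M)$.

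Next I would record the upper inequality. Since $M\subseteq N$, applying Lemma~\ref{lem:monotone} repeatedly (adding the jobs of $N\setminus M$ one at a time) yields $f(M)\le f(N)$. Theorem~\ref{thm:2approx} states that the solution of (\ref{eqn:UB}), whose value is exactly $f(N)$, is a $2$-approximation of (\ref{eqn:recsmsp}), i.e.\ $f(N)\le 2\,\mathrm{OPT}$. Chaining the three bounds gives $\mathrm{OPT}\le f(M)\le f(N)\le 2\,\mathrm{OPT}$, so the schedule produced by Algorithm~\ref{alg:eval} on input $M$ has objective value within a factor of $2$ of optimal.

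Finally, for the running-time claim: Algorithm~\ref{alg:eval} only sorts a constant number of length-$n$ vectors and forms an $n$-term sum, so it runs in $O(n\log n)$ time, which is polynomial; this completes the argument. There is no genuinely hard step here — the only points requiring care are checking that the returned pair really is feasible for (\ref{eqn:recsmsp}) (the inclusion $|X\cap Y|\ge|M|\ge\Delta$, which is what the constraints $x_{ij}=y_{ij}$ for $j\in M$ buy us), and making sure the monotonicity inequality points in the direction needed to transfer the $2$-approximation bound from $f(N)$ down to $f(M)$.
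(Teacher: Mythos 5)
Your proof is correct and follows essentially the same route the paper intends: feasibility of the Algorithm~\ref{alg:eval} output gives $\mathrm{OPT}\le f(M)$, repeated use of Lemma~\ref{lem:monotone} gives $f(M)\le f(N)$, and Theorem~\ref{thm:2approx} gives $f(N)\le 2\,\mathrm{OPT}$. The paper leaves this chaining implicit (it simply remarks that monotonicity transfers the guarantee of (\ref{eqn:UB}) to any feasible $M$-based solution), so your write-up just makes the same argument explicit, including the $O(n\log n)$ running time.
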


Corollary \ref{cor:2appx} suggests a natural greedy heuristic for solving (\ref{eqn:recsmsp}). This is outlined in Algorithm \ref{alg:greedy}. The procedure begins by setting $M=\emptyset$ and iteratively adds to it the element $i\in N\setminus M$ with the smallest objective value $f(M\cup\{i\})$ to the set $M$, until $|X\cap Y| \ge \Delta$. 

Due to Corollary~\ref{cor:2appx}, this greedy heuristic is ensured to be a 2-approximation. And due to Lemma \ref{lem:monotone}, we also know that its objective value can be no worse than (\ref{eqn:UB}). Each iteration of the greedy heuristic takes $O(n^2\log n)$ time and there are $O(\Delta)$ iterations, hence the heuristic runs in $O(n^3 \log n)$.
 
\begin{algorithm}[h] 
\caption{Greedy heurstic for (\ref{eqn:recsmsp})} \label{alg:greedy}
\begin{algorithmic}[1]
\Procedure {Greedy}{$\bm{p},\bm{q}, \Delta$}
\State \textbf{initialise }$M,X,Y=\emptyset$
	\While {$|X\cap Y| < \Delta$} 
 	\State Set $v=\infty$
	\For {$j^{'}\in N\setminus M$}
		\State $M^{'} \leftarrow M\cup \{j^{'}\}$
		\State $v^{'},\,X,\,Y \leftarrow \textproc{Eval}(\bm{p},\bm{q}, M^{'})$\Comment{Solve (\ref{eqn:recfix}) using \textproc{Eval}}
		\If {$v^{'} < v$}
			\State $v \leftarrow v^{'}$
			\State $j \leftarrow j^{'}$
		\EndIf
	\EndFor
	\State $M \leftarrow M \cup \{j\}$
\EndWhile
\State \Return $v,\,X,\,Y$
\EndProcedure
\end{algorithmic}
\end{algorithm}

\section{Computational experiments} \label{section:experiments}

In this section, we compare results from solving the exact MIP formulation of (\ref{eqn:recsmsp}) with those of the 2-approximation provided by solving (\ref{eqn:UB}), as well from the greedy heuristic outlined at the end of the previous section.

Before presenting these results in detail, we outline the test instances used for these experiments and the computational hardware on which these experiments were performed. Section~\ref{section:mip} investigates the performance of the MIP, focussing on its solution times and the gap to its linear relaxation. Following this, Section~\ref{section:ub} looks at the 2-approximation and shows that in practice it considerably outperforms its theoretical worst-case performance. Finally, Section~\ref{section:greedy} presents results from the greedy heuristic.

For each value of $n\in\{10,20,50,100\}$, 100 instances have been generated by randomly sampling $p_i$ and $q_i$, $i\in N$, from the set $\{1,2,\dots,100\}$. Experiments have been performed on these four instance sets for values of $\Delta\in \{0,1,2,\dots,n\}$. These problem instances, in addition to the complete results data, can be downloaded from \texttt{\url{https://github.com/boldm1/recoverable-robust-interval-SMSP}}. 

All the experiments have been run on 4 cores of a 2.30GHz Intel Xeon CPU, limited to 16GB RAM. The exact model has been solved using Gurobi 9.0.1, with a time limit of 20 minutes. The optimality gap tolerance (MIPGap) was changed to $1\times10^{-6}$; all other parameters were set to their default values.

\subsection{MIP}\label{section:mip}

We begin by considering the performance of the MIP formulation for (\ref{eqn:recsmsp}). Unsurprisingly, its performance depends on the size of $n$. However, more interestingly, it also depends on the number of free assignments across the first and second-stages, given by $n-\Delta$, and in particular whether this value is even or odd. We first present results that show the limitations of the MIP and demonstrate this dependence on the value of $n-\Delta$, before we then investigate the causes of this dependence by examining illustrative examples. 
 
Note that since $n$ is even for each of the instance sets we consider, even and odd values of $\Delta$ correspond to even and odd values of $n-\Delta$, respectively. Therefore, for ease of presentation, throughout this section we refer to the problem in terms of even and odd $\Delta$, rather than in terms of even and odd $n-\Delta$.
 
For $n=10$ and $n=20$, the MIP was solved to optimality within the 20 minute time limit for all instances and for all values of $\Delta$. For $n=50$, however, 5 instances for $\Delta=47$ and 47 instances for $\Delta=49$ were not solved to optimality within the time limit. And for $n=100$, the number of instances that were not solved to optimality within the time limit for the values of $\Delta$ are shown in the Table \ref{table:nonopt}. Note how it is only for certain odd values of $\Delta$ (and therefore odd values of $n-\Delta$) that the MIP cannot solve all instances. 

 \begin{table}[h]
	 \centering
 \begin{tabular}{l|rrrrrrrr}
	 $\Delta$ & 85 & 87 & 89 & 91 & 93 & 95 & 97 & 99\\ 
	 \hline
	 \# non-opt. & 2 & 3 & 9 & 17 & 43 & 57 & 83 & 100
 \end{tabular}
 \caption{Number of instances with $n=100$ for which the MIP could not be solved to optimality within 20 minutes, for different values of $\Delta$. All instances were solved to optimality for the values of $\Delta$ not present in the table.}
\label{table:nonopt}
 \end{table}

 Figure \ref{fig:time} shows the average time to solve the MIP for even and odd values of $\Delta$ as a percentage of $n$, for $n\in \{10,20,50,100\}$. A log-scale has been used for clarity. These plots show that solution times scale with $n$ as expected. However, for odd values of $\Delta$ we see that solution times increase considerably for large $\Delta$, whereas when $\Delta$ is even, it has little impact on solution times.

\begin{figure}[htb]
	\makebox[\linewidth][c]{
	\begin{subfigure}[b]{.6\textwidth}
		\centering
		\includegraphics[width=1\textwidth]{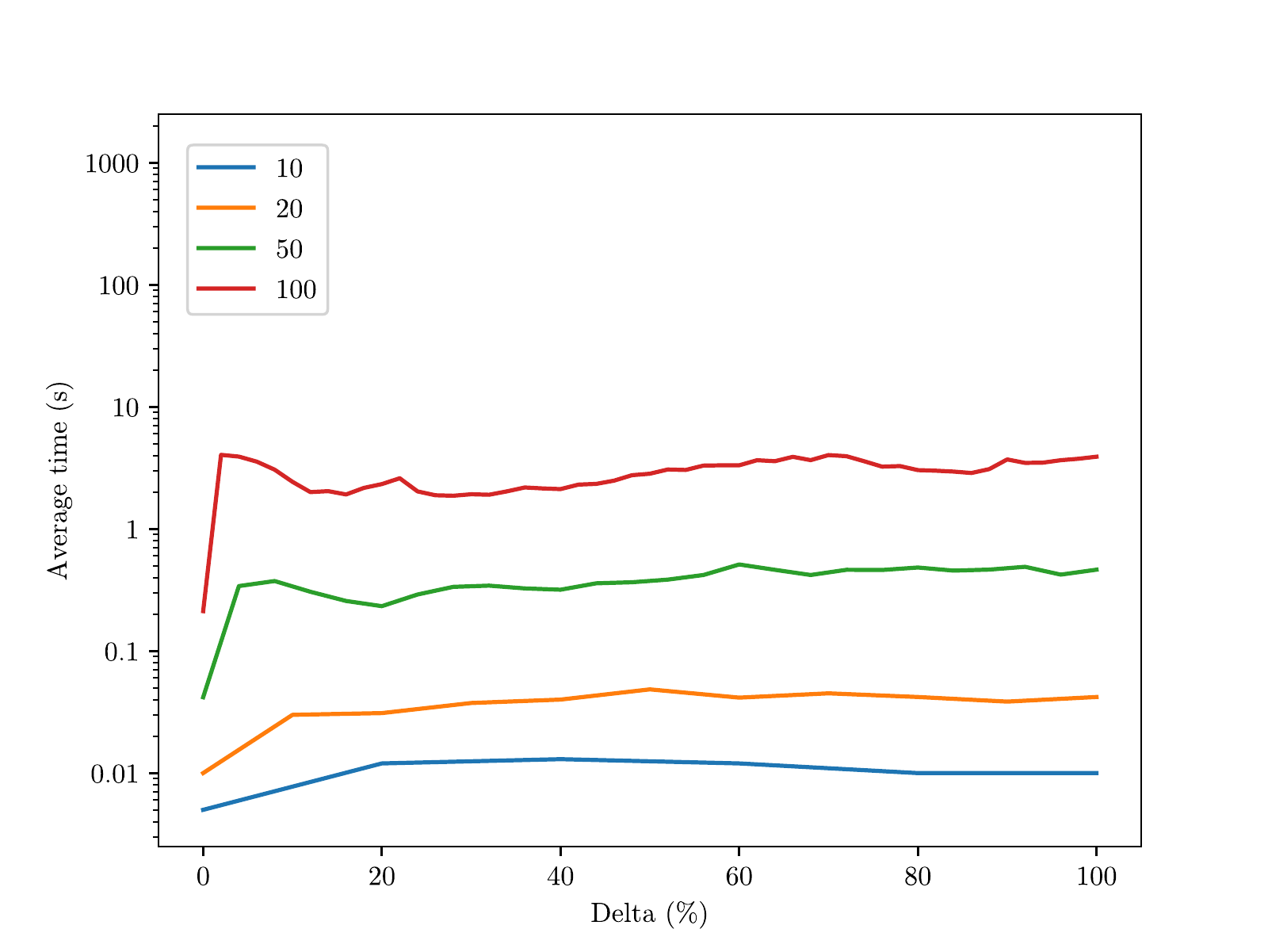}
		\caption{Even $\Delta$ values.}\label{fig:even_time}
	\end{subfigure}%
	\begin{subfigure}[b]{.6\textwidth}
		\centering
		\includegraphics[width=1\textwidth]{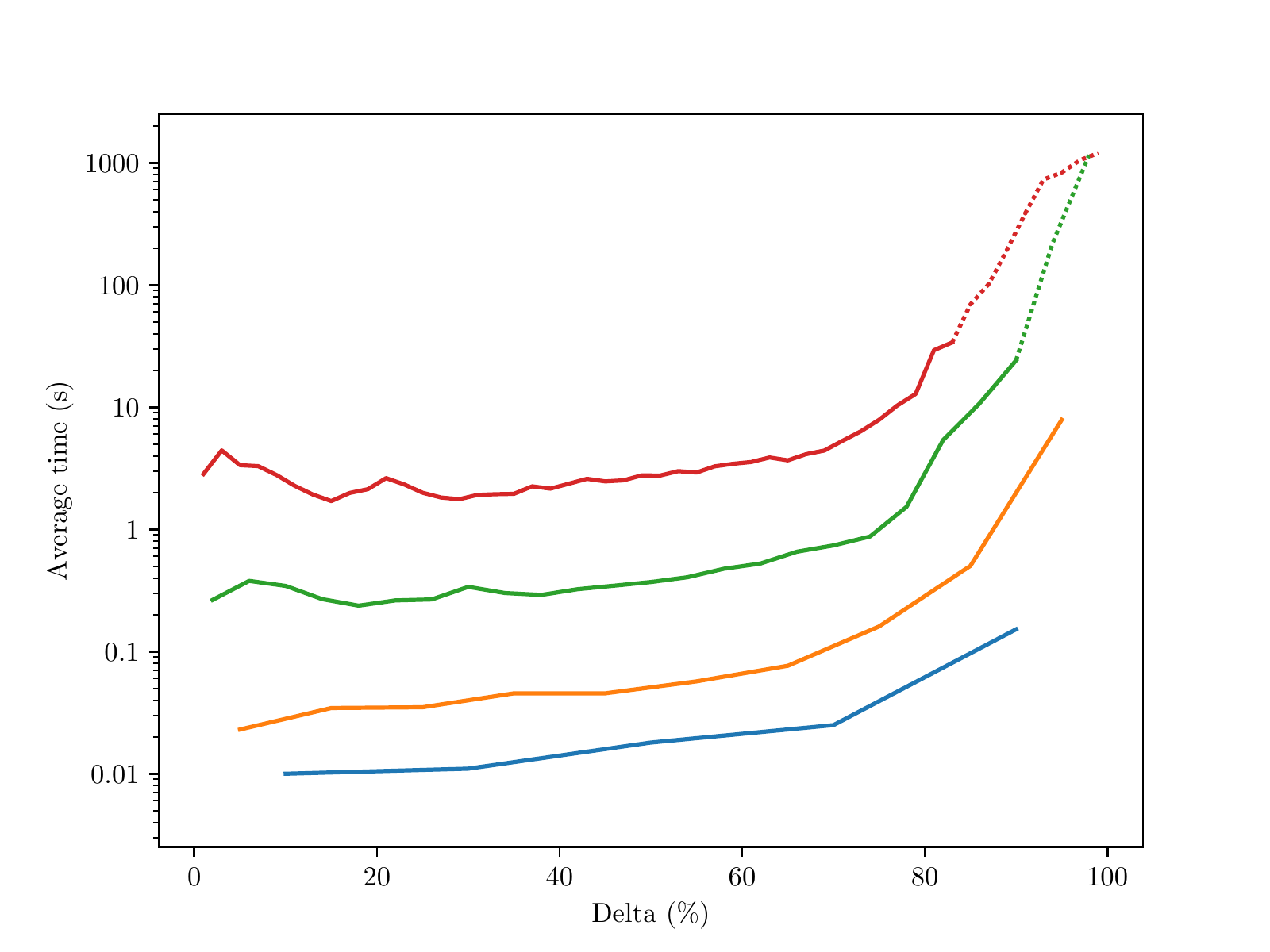}
		\caption{Odd $\Delta$ values.}\label{fig:odd_time}
	\end{subfigure}
	}
	\caption{Average time to solve the MIP for even and odd values of $\Delta$ as a percentage of $n$, for $n\in \{10,20,50,100\}$. Line becomes dotted when not all instances were solved to optimality within the 1200s (20 minute) time limit.}
	\label{fig:time}
\end{figure}

Clearly, the MIP cannot be easily solved for $n\geq 50$ for certain large values of $\Delta$, and therefore the performance of the approximate solution approaches we propose are of genuine interest. Before we examine their performances over the following two sections, we investigate the marked difference in difficultly between instances with even and odd $\Delta$.

We first consider the tightness of the MIP formulation by looking at the average gap between its solutions and the solutions of its linear relaxation (i.e. the average LP gap) in Figure \ref{fig:gap}. Observe that the average LP gaps for instances where $\Delta$ is even are orders of magnitudes smaller than for instances where $\Delta$ is odd; the largest single LP gap for an even-$\Delta$ instance and an odd-$\Delta$ instance for each of the instance sets are given in Table \ref{table:lpgap}.

 \begin{table}[H]
	 \centering
 \begin{tabular}{l|rcrc}
	 & \multicolumn{4}{c}{LP gap (\%)}\\
	 $n$ & \multicolumn{2}{c}{even-$\Delta$} & \multicolumn{2}{c}{odd-$\Delta$} \\ 
	 \hline
	 10 & 0.20 & ($\Delta=4$) & 5.66 & ($\Delta=9$) \\
	 20 & 0.17 & ($\Delta=14$) & 2.89 & ($\Delta=19$) \\
	 50 & 0.07 & ($\Delta=38$) & 1.77 & ($\Delta=49$) \\
	 100 & 0.03 & ($\Delta=82$) & 0.89 & ($\Delta=99$)
 \end{tabular}
 \caption{Largest LP gap for a single even-$\Delta$ instance and odd-$\Delta$ instance for each of the instance sets.}
\label{table:lpgap}
 \end{table}

Figure \ref{fig:non_opt} shows the percentage of instances for which the LP gap is non-zero, for even and odd values of $\Delta$ as a percentage of $n$, for $n\in \{10,20,50,100\}$. Looking at this figure, we see that the proportion of instances with non-zero LP gaps is broadly similar for each of the four instance sets over the values of $\Delta$. Therefore, to explain the results seen in Figure \ref{fig:gap}, it must be the case that the LP gaps become smaller as $n$ increases. Figure \ref{fig:non_opt} also shows that for larger values of $\Delta$, the frequency of non-zero LP gaps increases dramatically for instances with odd values of $\Delta$, but not for instances with even values of $\Delta$.
 
\begin{figure}[htbp]
	\begin{subfigure}[b]{.5\textwidth}
		\centering
		\includegraphics[width=1\textwidth]{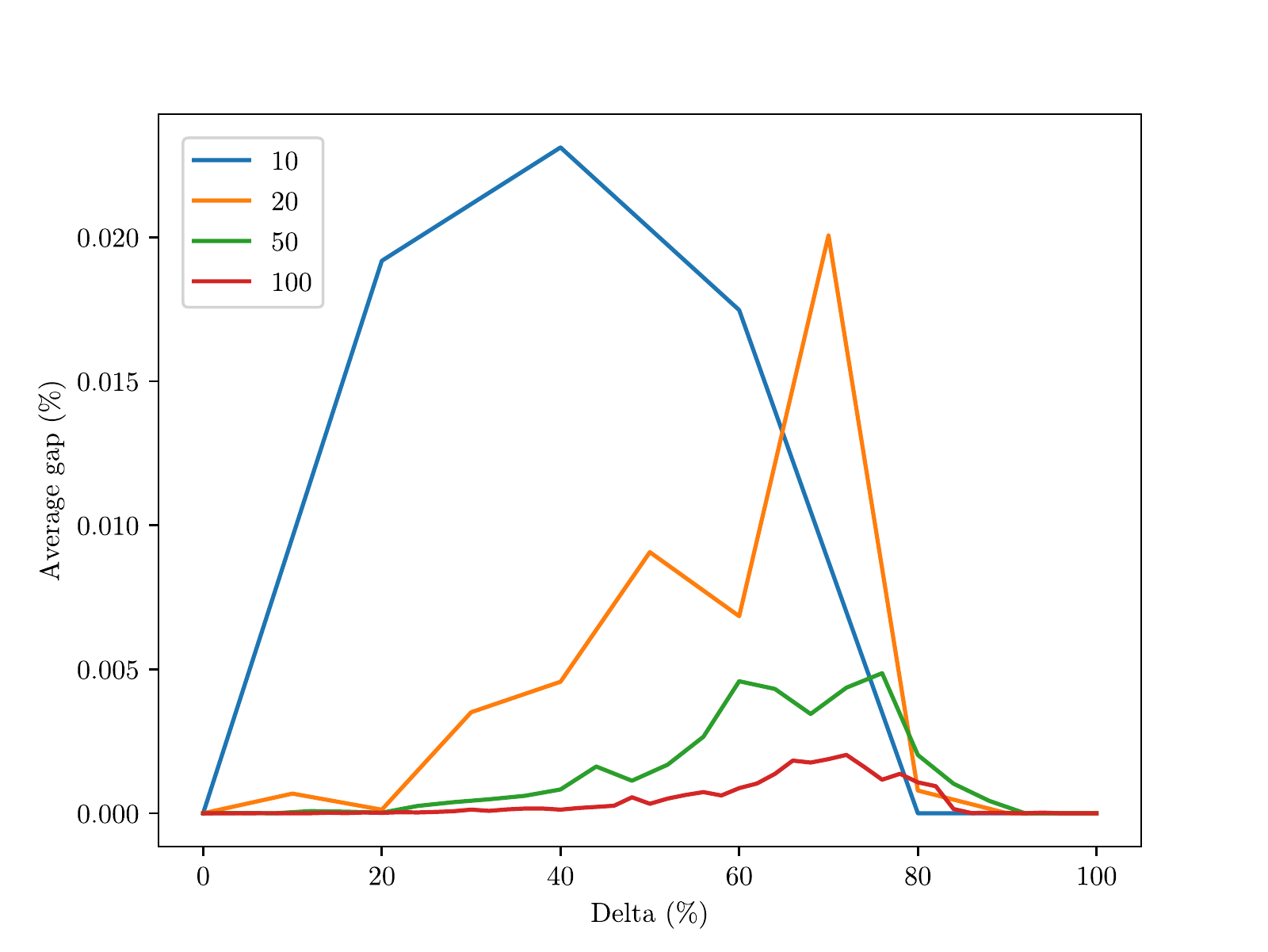}
		\caption{Even $\Delta$ values.}\label{fig:even_gap}
	\end{subfigure}%
	\begin{subfigure}[b]{.5\textwidth}
		\centering
		\includegraphics[width=1\textwidth]{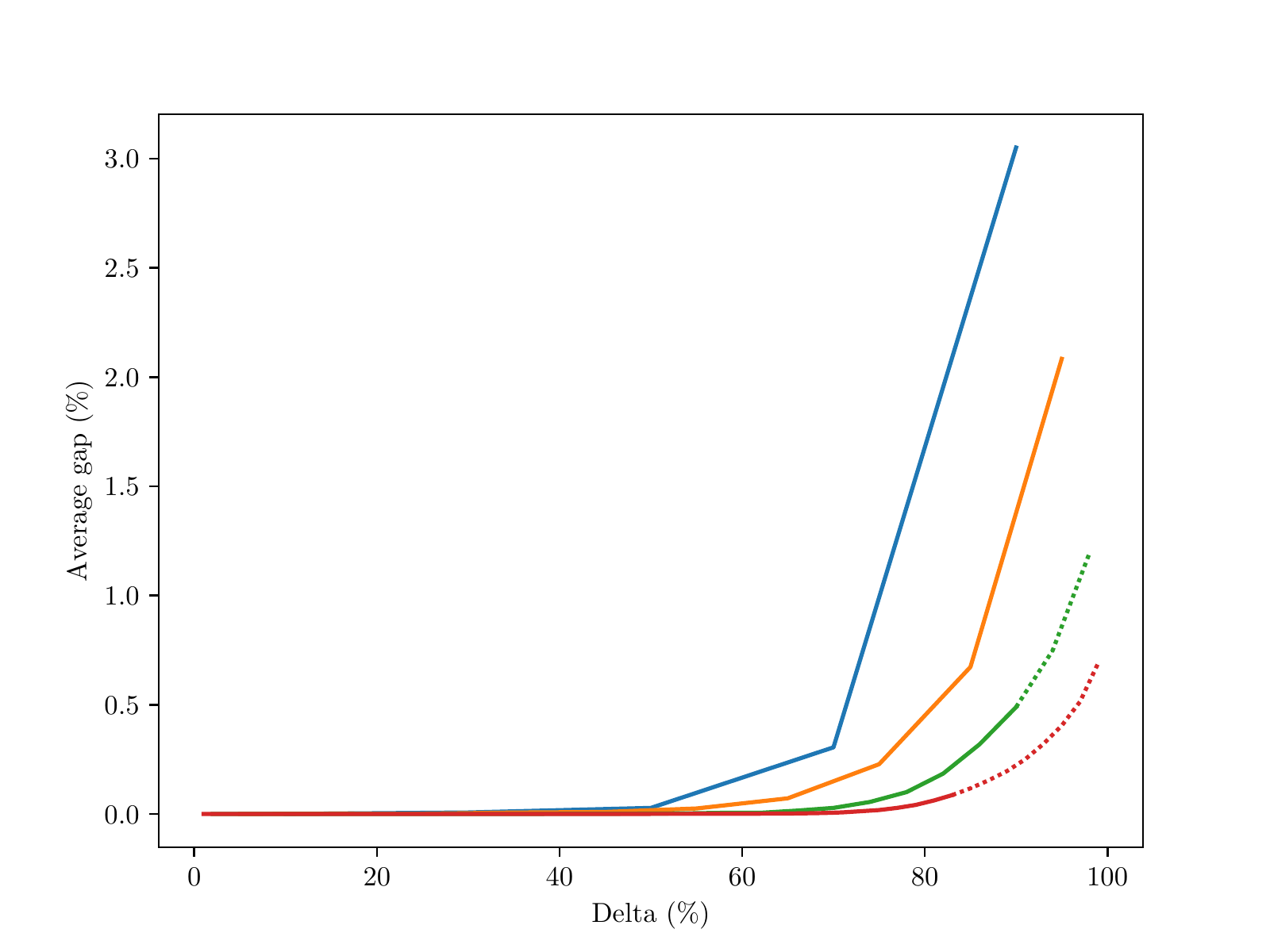}
		\caption{Odd $\Delta$ values.}\label{fig:odd_gap}
		\end{subfigure}
	\caption{Average LP gap for even and odd values of $\Delta$ as a percentage of $n$, for $n\in \{10,20,50,100\}$. Line becomes dotted when not all instances were solved to optimality within the 20 minute time limit.}
	\label{fig:gap}
\end{figure}

\begin{figure}[htbp]
	\begin{subfigure}[b]{.5\textwidth}
		\centering
		\includegraphics[width=1\textwidth]{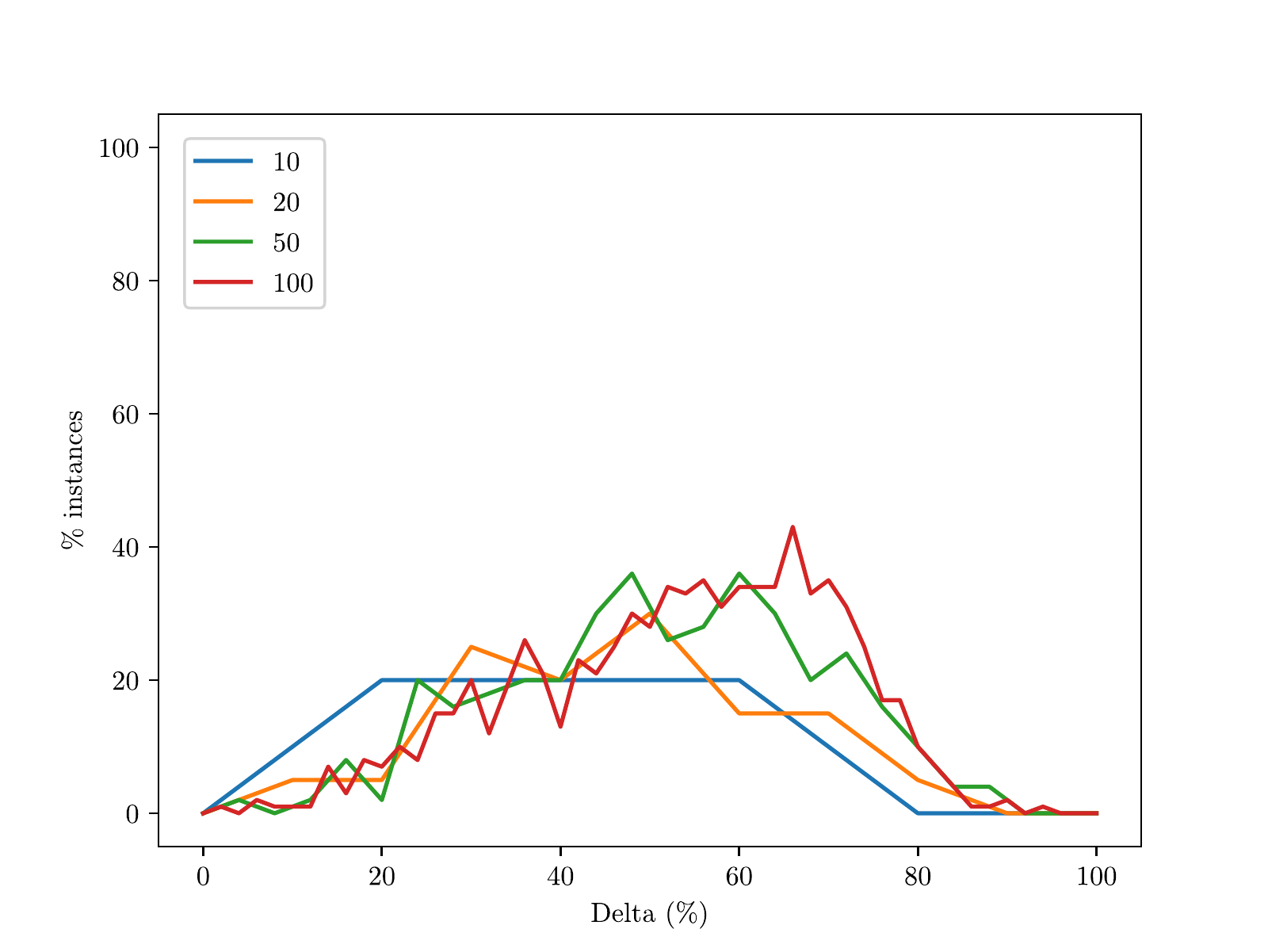}
		\caption{Even $\Delta$ values.}\label{fig:even_nonopt}
	\end{subfigure}%
	\begin{subfigure}[b]{.5\textwidth}
		\centering
		\includegraphics[width=1\textwidth]{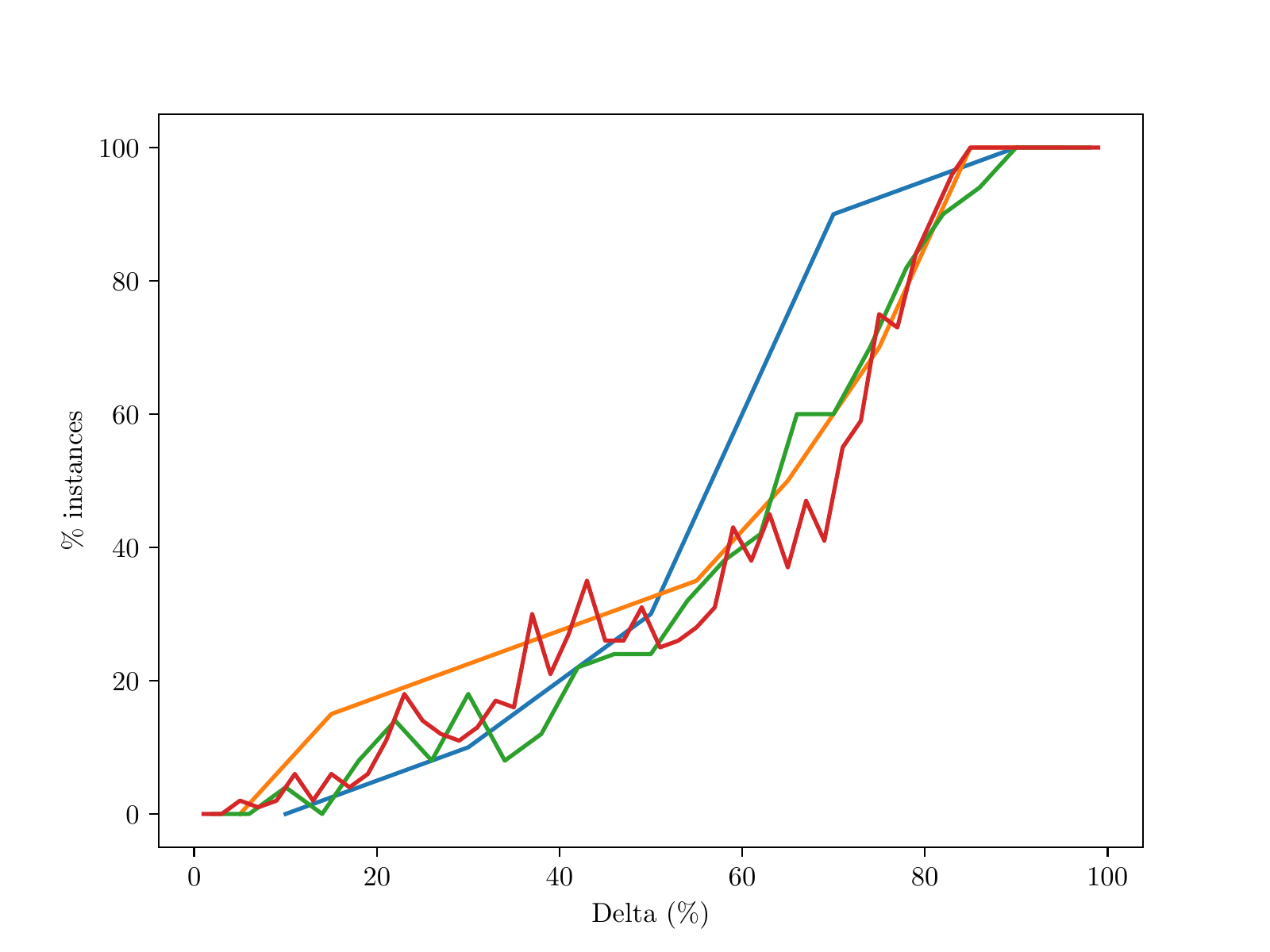}
		\caption{Odd $\Delta$ values.}\label{fig:odd_nonopt}
		\end{subfigure}
	\caption{Percentage of instances for which the LP gap is non-zero, for even and odd values of $\Delta$ as a percentage of $n$, for $n\in \{10,20,50,100\}$.}
	\label{fig:non_opt}
\end{figure}


\begin{figure}[htbp]
	\begin{subfigure}[t]{.5\textwidth}
		\centering\captionsetup{width=0.9\linewidth}
		\includegraphics[width=0.8\textwidth]{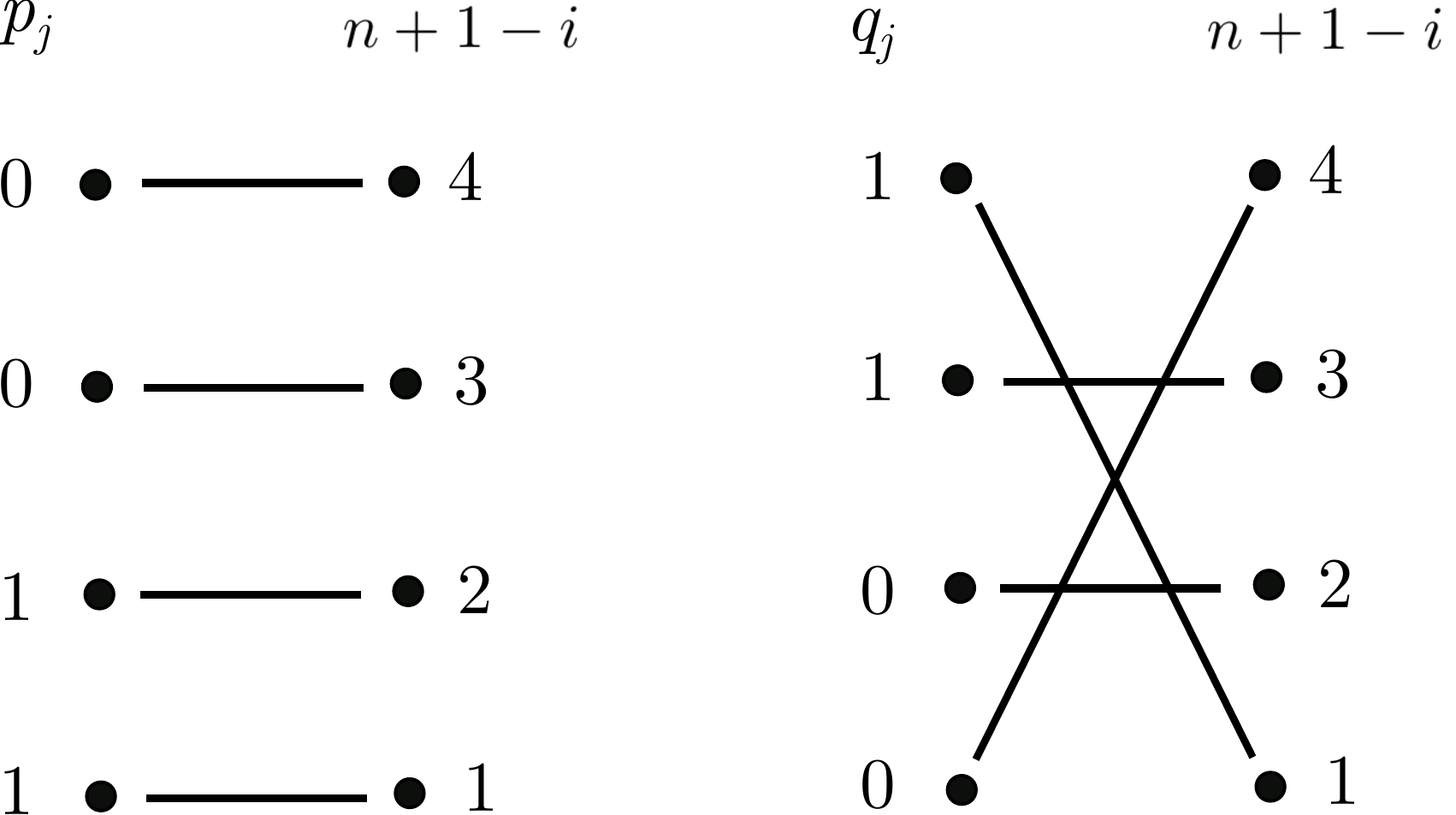}
		\caption{Solution found by MIP for $\Delta=1$ and $\Delta=2$ and by linear relaxation for $\Delta=2$. Objective value is $v=3+4=7$.}\label{fig:d2}
	\end{subfigure}%
	\begin{subfigure}[t]{.5\textwidth}
		\centering\captionsetup{width=0.9\linewidth}
		\includegraphics[width=0.8\textwidth]{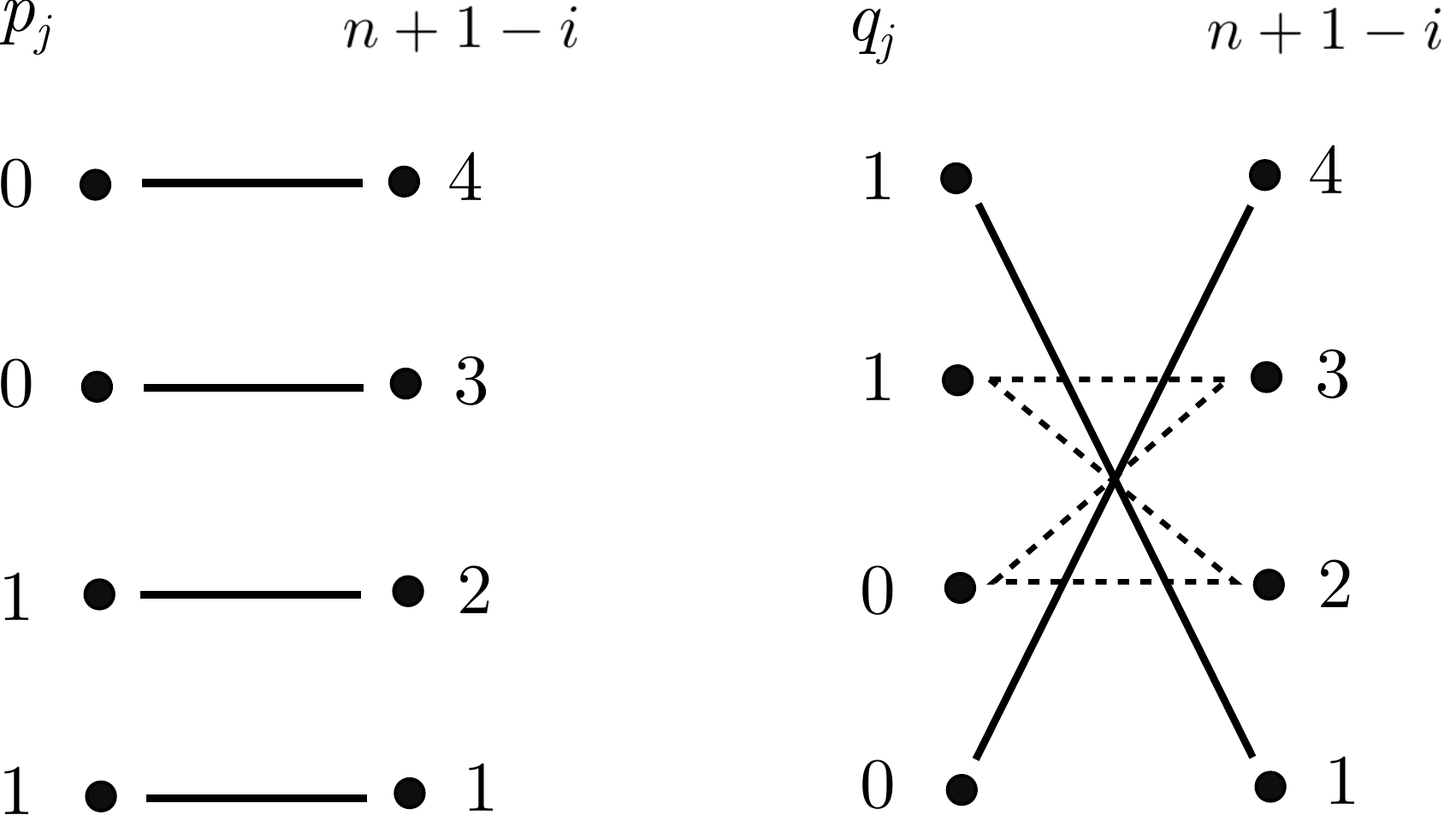}
		\caption{Solution found by linear relaxation for $\Delta=1$. Objective value is $v=3+3.5=6.5$.\vspace{4.5mm}}\label{fig:d1}
		\end{subfigure}
	\caption{Solutions to instance shown in Figure \ref{fig_exa}, found by MIP and its linear relaxation for $\Delta=1$ and $\Delta=2$.}
\end{figure}
\begin{figure}[htbp]
	\begin{subfigure}[t]{.5\textwidth}
		\centering\captionsetup{width=0.9\linewidth}
		\includegraphics[width=0.8\textwidth]{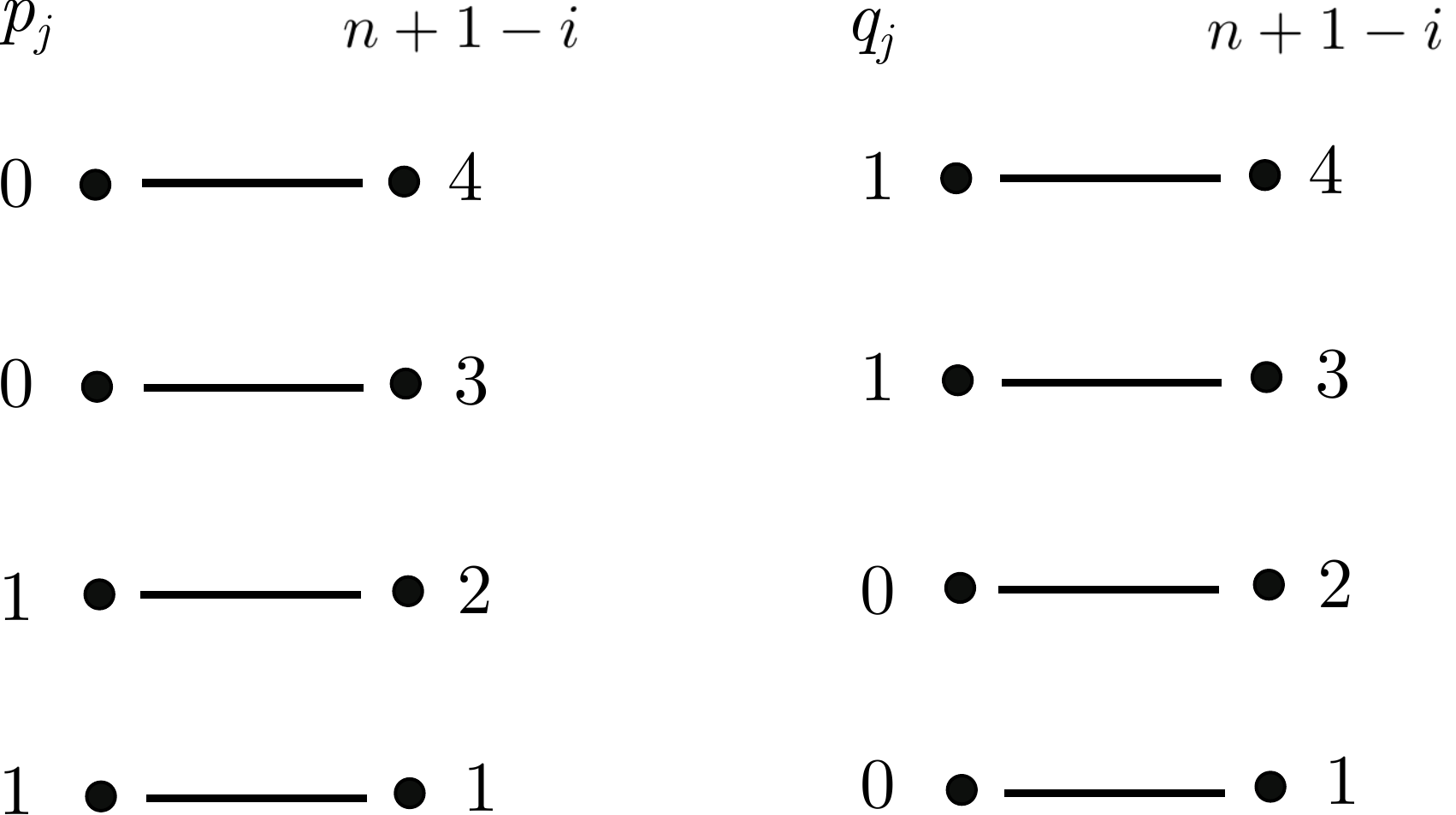}
		\caption{Solution found by MIP for $\Delta=3$ and $\Delta=4$ and by linear relaxation for $\Delta=4$. Objective value is $v=3+7=10$.}\label{fig:d4}
	\end{subfigure}%
	\begin{subfigure}[t]{.5\textwidth}
		\centering\captionsetup{width=0.9\linewidth}
		\includegraphics[width=0.8\textwidth]{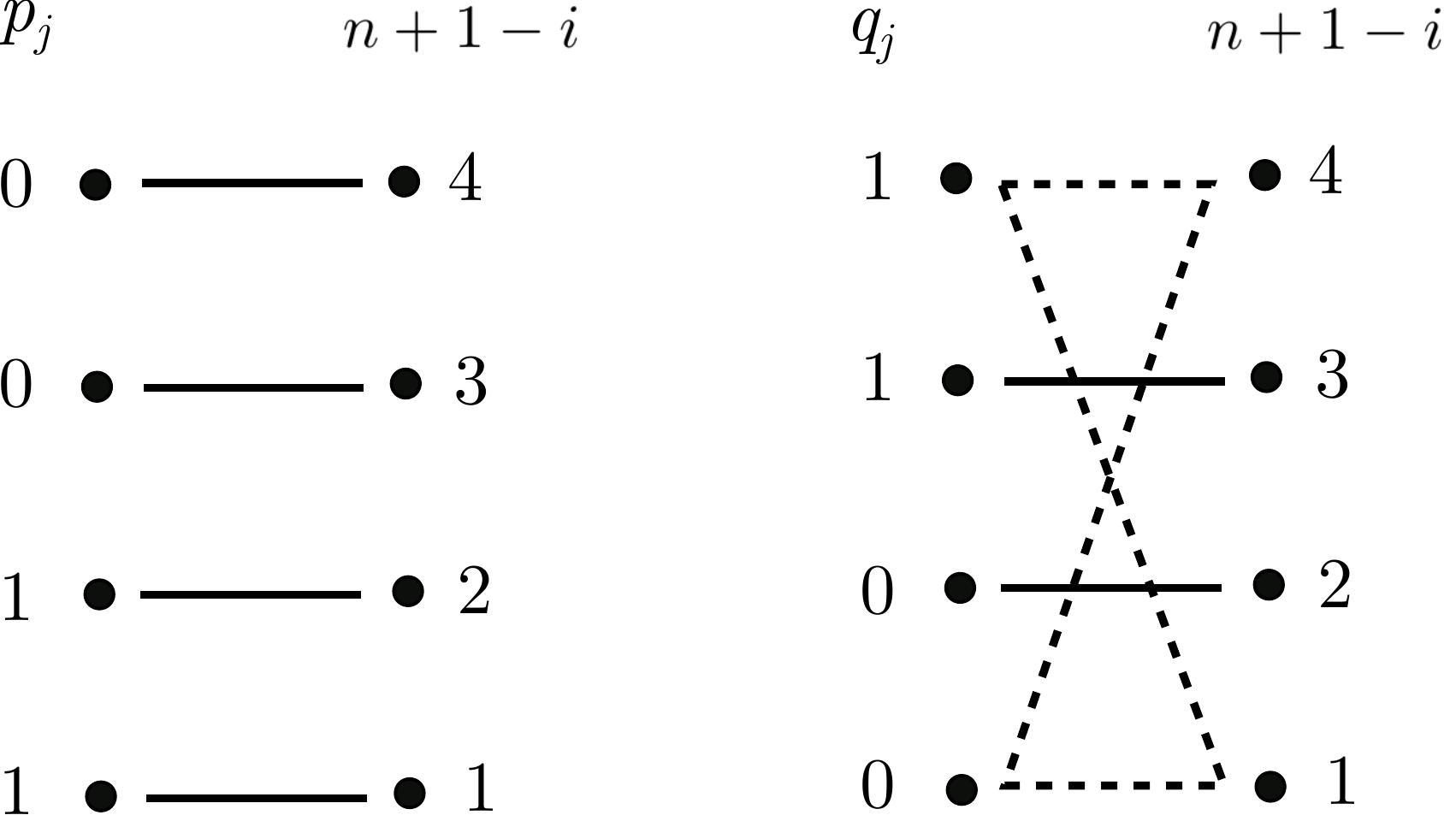}
		\caption{Solution found by linear relaxation for $\Delta=3$. Objective value is $v=3+5.5=8.5$.\vspace{4.5mm}}\label{fig:d3}
		\end{subfigure}
	\caption{Solutions to instance shown in Figure \ref{fig_exa}, found by MIP and its linear relaxation for $\Delta=3$ and $\Delta=4$.}
	
\end{figure}

In an attempt to shed light on this distinction, we consider the example with $n=4$ jobs with binary processing times given in Figure \ref{fig_exa}. Given $\Delta$, we are free to allow up to $n-\Delta$ assignments to differ between the first and second-stage schedules. For clarity of presentation, in the following we assume that the first-stage schedule is a fixed horizontal assignment, and make changes to the assignments in the second-stage only.


Observe that for a given $\Delta$, the best integral solution is found by swapping the positions of the jobs scheduled first and last, then swapping the positions of the jobs scheduled second and second-last, and so on, until no more swaps can be made without exceeding the $n-\Delta$ available free assignments. When $n-\Delta$ is even, the integral solution is able to change the assignments of exactly $n-\Delta$ jobs, and its linear relaxation finds the same solution and gains no advantage. When $n-\Delta$ is odd however, the integral solution is only able to change the assignments of $n-\Delta-1$ jobs. In this case, the linear relaxation is able to make use of fractional assignments to beat the integral solution.

For example, when $\Delta=1$, the MIP swaps the positions of only two jobs (Figure \ref{fig:d2}). The linear relaxation gains an advantage over the integral solution since it can fractionally swap a further two jobs to make use of the last remaining free assignment (Figure \ref{fig:d1}). In this instance the LP gap is given by $\frac{7-6.5}{6.5}\approx7.7$\%. 

Similarly, when $\Delta=3$, the integral solution cannot feasibly swap the positions of any jobs (Figure \ref{fig:d4}), whereas its linear relaxation is able to fractionally swap the positions of the first and last jobs (Figure \ref{fig:d3}). The LP gap in this case is given by $\frac{10-8.5}{8.5}\approx$17.6\%.

Experimental results seem to suggest that the maximum LP gap the can be achieved is 20\%, which occurs for a fully-crossed instance with binary processing times when $n=2$ and $\Delta=1$ and when $n=3$ and $\Delta=2$.

\subsection{\ref{eqn:UB}}\label{section:ub}

 As the results in the previous section demonstrate, large instances of (\ref{eqn:recsmsp}) cannot be solved within a reasonable time limit by the MIP. Therefore, accurate heuristic approximations are valuable for solving this problem. In this section, we look at the quality of the solutions to (\ref{eqn:UB}) in practice, to compare them against their theoretical worst-case performance ratio of 2. Recall that (\ref{eqn:UB}) can be solved trivially by ordering the jobs $j\in N$ by non-decreasing $p_j+q_j$ values. In our experiments, (\ref{eqn:UB}) was solved in less than 0.01 seconds for every instance. 

 Figure \ref{fig:avg_ub_gap} shows the average relative gap between the solution to (\ref{eqn:UB}) and the solution to (\ref{eqn:recsmsp}) found by solving the MIP, plotted for values of $\Delta$ as a percentage of $n$, for $n\in\{10,20,50,100\}$. A dotted line has been used to indicate the range of $\Delta$ values for which the MIP was not solved to optimality for all instances within the 20 minute time limit for the odd values of $\Delta$ within this range. The average gaps displayed for the odd values of $\Delta$ in this range are therefore estimates of the true average gap. 
 
 For all problem sizes the average gap decreases as $\Delta$ increases, reaching 0 when $\Delta=n$, where (\ref{eqn:recsmsp}) and (\ref{eqn:UB}) become equivalent. For $n=10$, the largest gap between for any single instance was 21.7\%; for $n=20$ the largest single gap was 18.2\%, for $n=50$ it was 23.0\%, and for $n=100$ it was 21.1\%. For each value of $n$, this largest single gap was attained when $\Delta=0$. Clearly, the worst-case gaps we see in practice are considerably smaller than the theoretical worst-case gap of 100\%.  
 
 \begin{figure}[htb]
 	\centering
 	\includegraphics[scale=0.65]{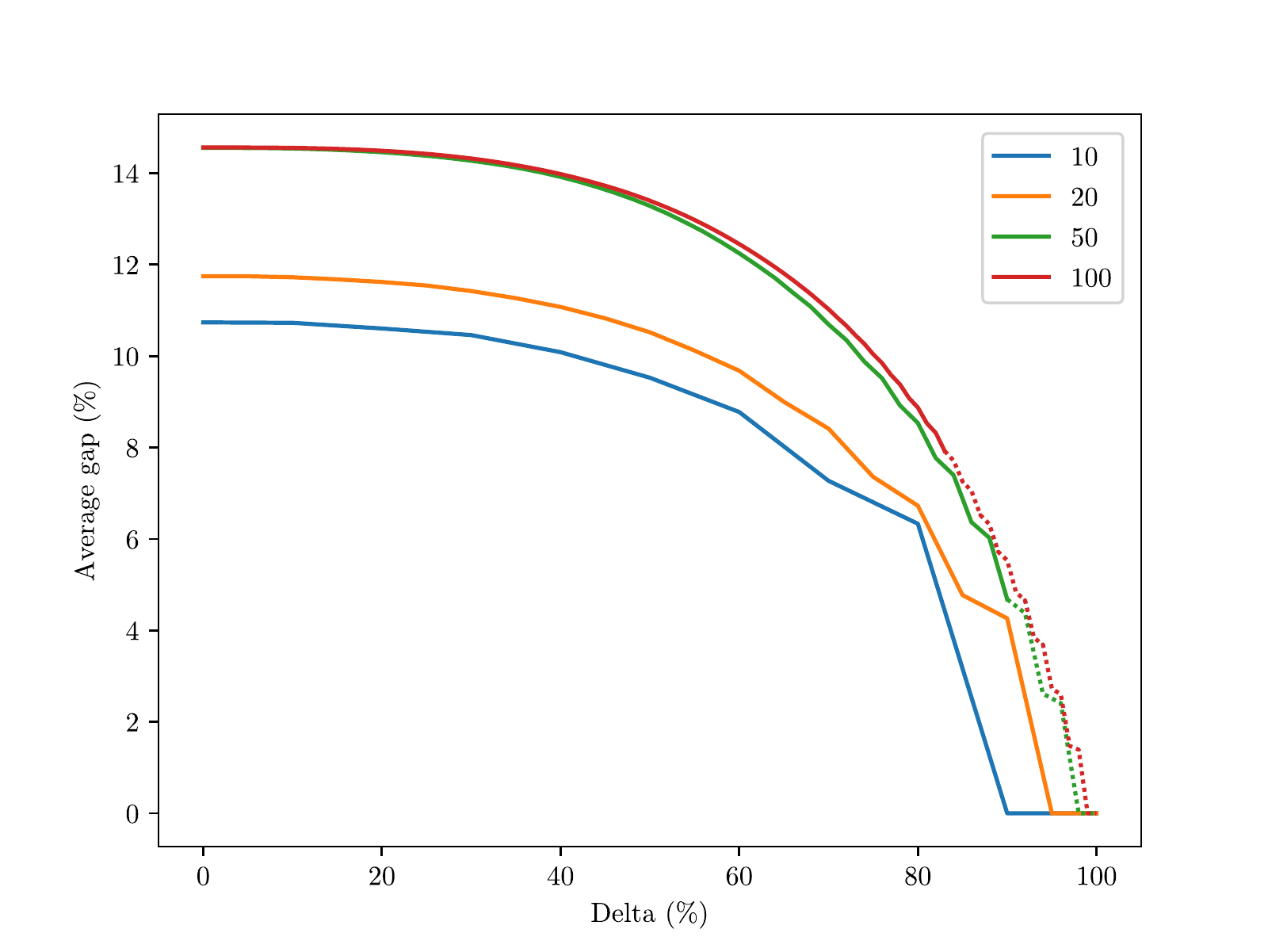}
	\caption{Average gap between solution to (\ref{eqn:UB}) and the solution to (\ref{eqn:recsmsp}) for values of $\Delta$ as a percentage of $n$, for $n\in\{10,20,50,100\}$. Line becomes dotted for the range of $\Delta$ values for which not all instances were solved to optimality by the MIP within its 20 minute time limit for the odd values of $\Delta$ within this range.}
 	\label{fig:avg_ub_gap}
 \end{figure}

 \subsection{Greedy heuristic}\label{section:greedy}
 
 Finally, we examine the performance of greedy heuristic presented in Algorithm \ref{alg:greedy} for solving (\ref{eqn:recsmsp}). Recall that, as a result of Corollary~\ref{cor:2appx}, the objective value of solutions found by the greedy heuristic can be no worse than the solutions to (\ref{eqn:UB}), and as these results show, in practice they are considerably better than this.
 
 We first look at the average run time of the greedy heuristic, shown in Figure \ref{fig:avg_greedy_time} for values of $\Delta$ as a percentage of $n$, for each of the four instance sets. As expected, we see that the average run time grows with $n$ and with $\Delta$. The longest run time of any single instance was 4.72 seconds, which occurred for an instance with $n=100,\,\Delta=95$.
 
 \begin{figure}[htbp]
 	\centering
 	\includegraphics[scale=0.65]{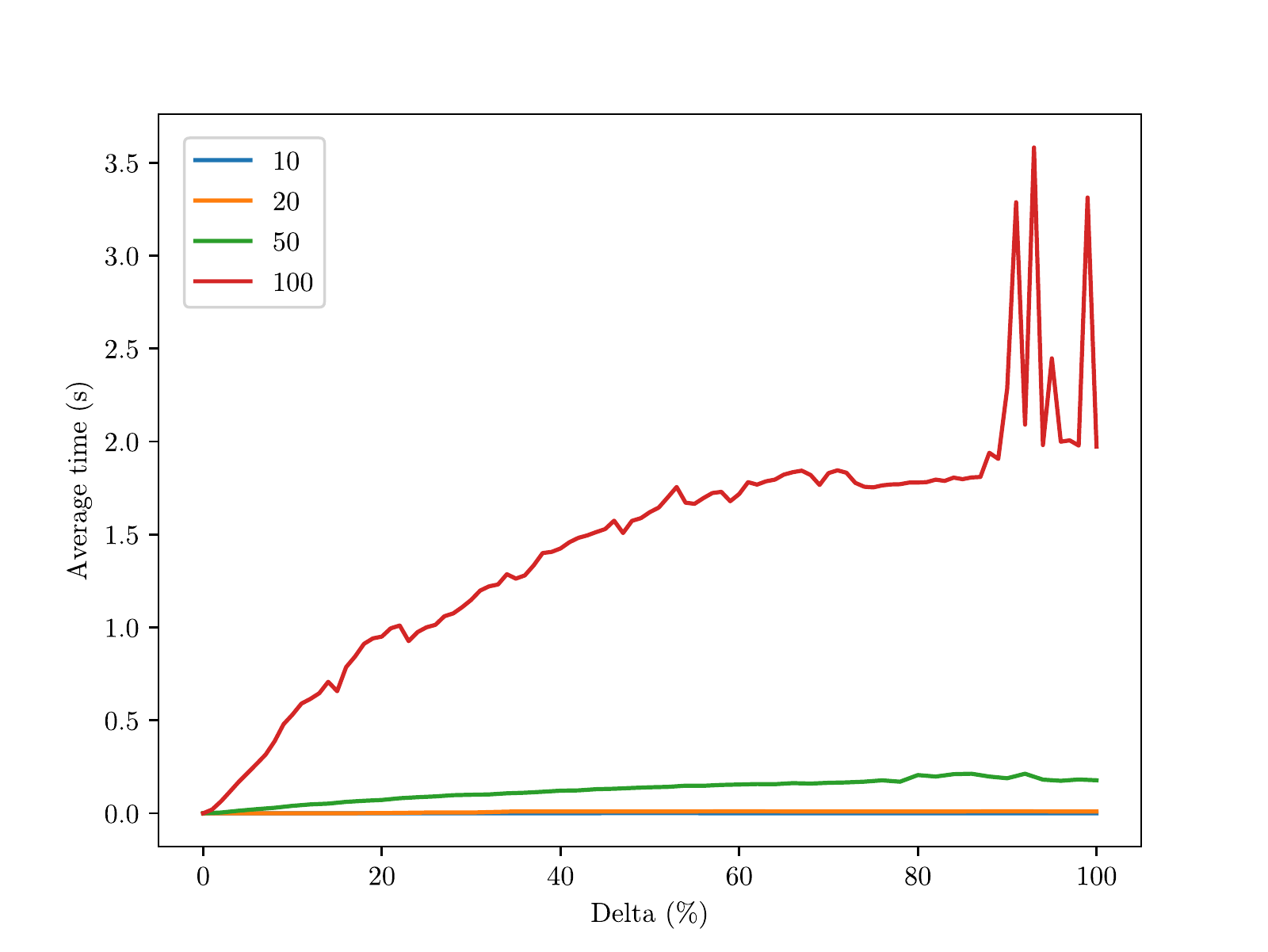}
	\caption{Average greedy heuristic run time for values of $\Delta$ as a percentage of $n$, for $n\in\{10,20,50,100\}$.}
 	\label{fig:avg_greedy_time}
 \end{figure}

 Figure \ref{fig:avg_rel_gap} shows the average relative gap between the solution found by the greedy heuristic and the solution to the MIP for values of $\Delta$ as a percentage of $n$, for $n\in\{10,20,50,100\}$. A dotted line connects the data points at each value of $\Delta$, whilst a solid line connects the data points corresponding to odd values of $\Delta$ only. The figure shows that the accuracy of the greedy heuristic varies depending on whether $\Delta$ is odd or even, demonstrated by the characteristic zig-zagging of the dotted line. 
 
 Even as the average gap between the greedy heuristic and the MIP increases with $\Delta$, this gap is bounded above by the average gap between the solutions to (\ref{eqn:UB}) and the MIP, which sharply decreases with $\Delta$ (see Figure \ref{fig:avg_ub_gap}). The greedy heuristic finds solutions that are an order of magnitude closer to optimal than the solutions to (\ref{eqn:UB}). For $n=10$ the maximum gap between the solution found by the greedy heuristic and the MIP solution for a single instance is 0.55\%, for $n=20$ it is 0.59\%, for $n=50$ it is 0.33\%, and for $n=100$ it is 0.18\%. Hence, solutions found by the greedy heuristic are nearly optimal.
 
 
 \begin{figure}[htbp]
 	\centering
 	\includegraphics[scale=0.65]{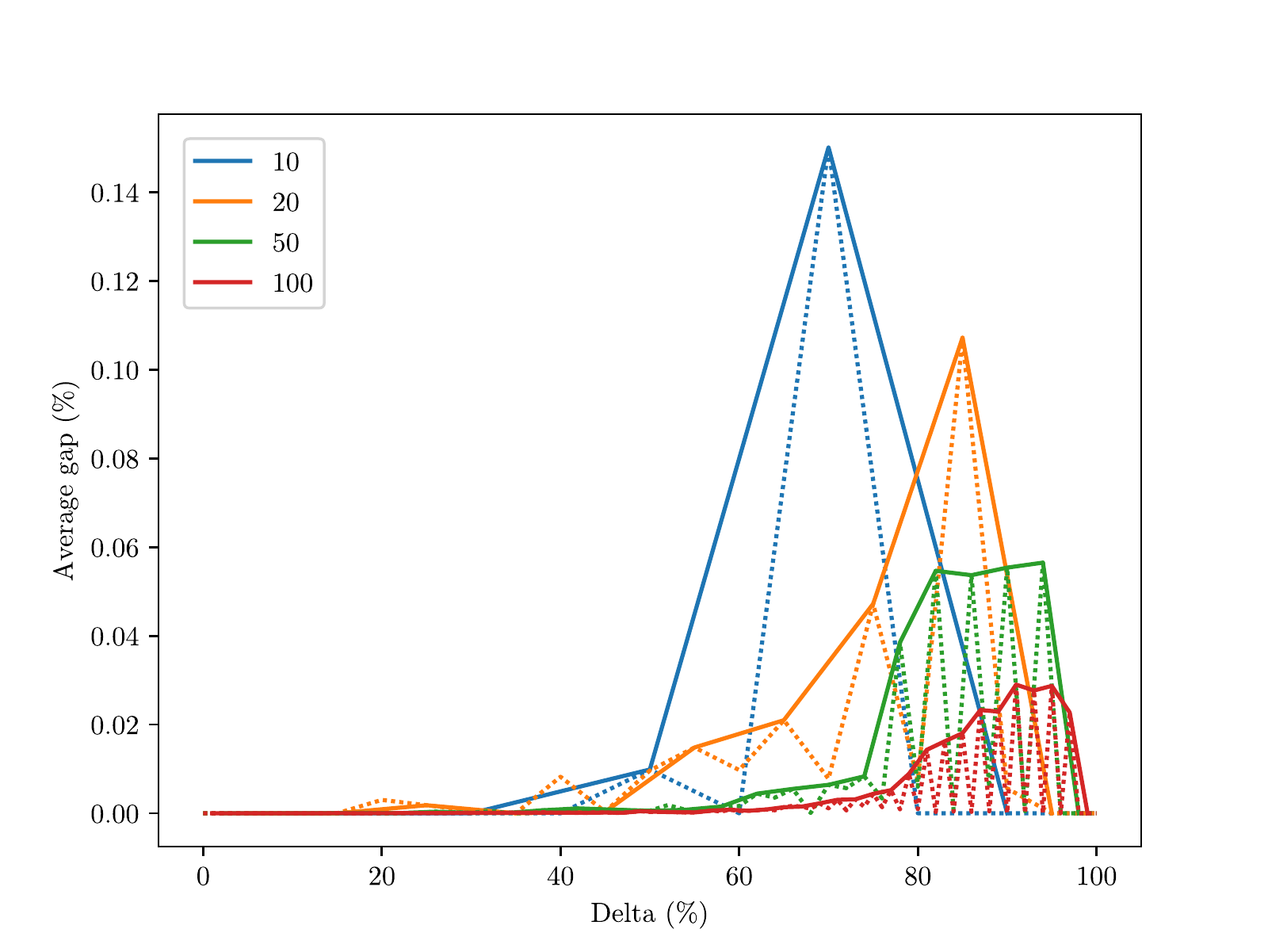}
	\caption{Average gap between greedy and MIP solution for values of $\Delta$ as a percentage of $n$, for $n\in\{10,20,50,100\}$, plotted as a dotted line. Odd-$\Delta$ data points are connected with a solid line.}
 	\label{fig:avg_rel_gap}
 \end{figure}

\section{Conclusions} \label{section:conclusion}

This paper has presented a theoretical and computational investigation of the recoverable robust single machine scheduling problem under interval uncertainty with the objective of minimising the sum of completion times, (\ref{eqn:recsmsp}). Firstly, a number of positive complexity results have been produced for key subproblems. Specifically, we present a polynomial time sorting algorithm in Algorithm \ref{alg:eval} which can optimally solve the case where a subset of jobs that must share a first and second-stage assignment is given. Using this result, we show that the special cases of (\ref{eqn:recsmsp}) with a constant value of $\Delta$ and with a constant number of possible processing times are solvable in polynomial time. We then introduce the problem (\ref{eqn:UB}) in which the first and second-stage assignments are forced to be identical, and prove that the solution to this problem provides a 2-approximation of (\ref{eqn:recsmsp}). This guarantee extends to the solutions found by a simple and fast greedy heuristic.
A set of extensive computational experiments investigate the limitations of an exact mixed-integer programming approach to solving the problem, show that the solutions of (\ref{eqn:UB}) significantly outperform their worst-case guarantee in practice, and demonstrate the strength of the proposed greedy heuristic.

Note that the computational complexity of (\ref{eqn:recsmsp}) has yet to be characterised and remains unknown. In addition to investigating this, a promising direction for future research on this problem would be to consider the impact of alternative uncertainty sets on its complexity. Finally, another possibility for future research would be to look at this recoverable robust model for single machine scheduling problems with objective criteria different to the sum of completion times that we consider here.

\section*{Acknowledgements}

The authors are grateful for the support of the EPSRC-funded (EP/L015692/1) STOR-i Centre for Doctoral Training.

\bibliography{project4}

\end{document}